\documentclass[runningheads,a4paper]{llncs}

\usepackage{times}
\usepackage{amssymb,latexsym,amsmath}     
\setcounter{tocdepth}{3}

\usepackage{thmtools, thm-restate}

\usepackage{amsfonts}
\usepackage{graphicx}
\usepackage{enumerate}
\usepackage[usenames,dvipsnames]{xcolor}
\usepackage{framed}
\usepackage{tikz}
\usepackage{float}
\usepackage{textcomp}
\usepackage{url}
\urldef{\mailsa}\path|{hajiagha, hmahini, asawant}@cs.umd.edu|

\begin{document}

\mainmatter  

\title{Scheduling a Cascade with Opposing Influences}

\author{MohammadTaghi Hajiaghayi\and Hamid Mahini\and Anshul Sawant}

\institute{University of Maryland at College Park
\\ \mailsa 
}
\maketitle

\begin{abstract}
Adoption or rejection of ideas, products, and technologies in a society is often governed by simultaneous propagation of positive and negative influences. Consider a planner trying to introduce an idea in different parts of a society at different times. How should the planner design a schedule considering this fact that positive reaction to the idea in early areas has a positive impact on probability of success in later areas, whereas a flopped reaction has exactly the opposite impact? 
We generalize a well-known economic model which has been recently used by Chierichetti, Kleinberg, and Panconesi (ACM EC'12). In this model the reaction of each area is determined by its initial preference and the reaction of early areas.
We model the society by a graph where each node represents a group of people with the same preferences. We consider a full propagation setting where news and influences propagate between every two areas. 
We generalize previous works by studying the problem when people in different areas have various behaviors.

We first prove, independent of the planner's schedule, influences help (resp., hurt) the planner to propagate her idea if it is an appealing (resp., unappealing) idea. 
We also study the problem of designing the optimal non-adaptive spreading strategy. In the non-adaptive spreading strategy, the schedule is fixed at the beginning and is never changed. Whereas, in adaptive spreading strategy the planner decides about the next move based on the current state of the cascade.   We demonstrate that it is hard to propose a non-adaptive spreading strategy in general. Nevertheless, we propose an algorithm to find the best non-adaptive spreading strategy when probabilities of different behaviors of people in various areas drawn i.i.d from an unknown distribution. 
Then, we consider the influence propagation phenomenon when the underlying influence network can be any arbitrary graph. We show it is $\#P$-complete to compute the expected number of adopters for a given spreading strategy. However, we design a polynomial-time algorithm for the problem of computing the expected number of adopters for a given schedule in the full propagation setting. 
Last but not least, we give a polynomial-time algorithm for designing an optimal adaptive spreading strategy in the full propagation setting.

\keywords{Influence Maximization,
Scheduling,
Spreading Strategy,
Algorithm.}
\end{abstract}

\def\yes{\mathcal{Y}}
\def\no{\mathcal{N}}

\def\cat{area}
\def\cats{areas}
\def\Cat{Area}
\def\Cats{Areas}
\def\iid{i.i.d.}
\def\strategy{spreading strategy}
\def\strategies{spreading strategies}
\def\Strategy{Spreading strategy}

\section{Introduction}
People's opinions are usually formed by their friends' opinions. 
Whenever a new concept is introduced into a society, the high correlation between people's reactions initiates an influence propagation. 
Under this propagation, the problem of promoting a product or an opinion depends on the problem of directing the flow of influences. 
As a result, a planner can develop a new idea by controlling the flow of influences in a desired way. Although there have been many attempts to understand the behavior of influence propagation in a social network, the topic is still controversial due to lack of reliable information and complex behavior of this phenomenon. 
For example, one compelling approach is ``seeding'' which was introduced by the seminal work of Kempe, Kleinberg, and Trados \cite{KKT03} and is well-studied in the literature \cite{KKT03,KKT05,MR07}. The idea is to influence a group of people in the initial investment period and spread the desired opinion in the ultimate exploitation phase. 
Another approach is to use time-varying and customer-specific prices to propagate the product (see e.g., \cite{AEGHIMM10,AGHMMN10,HMS08}).
All of these papers investigate the influence propagation problem when only positive influences spread into the network.
However, in many real world applications people are affected by both positive and negative influences, e.g., when both consenting and dissenting opinions broadcast simultaneously.

We generalize a well-known economic model introduced by Arthur \cite{Arthur89}. This model has been recently used by Chierichetti, Kleinberg, and Panconesi \cite{CKP12}. 
Assume an organization is going to develop a new idea in a society where the people in the society are grouped into $n$ different areas. 
Each area consists of people living near each other with almost the same preferences.
The planner schedules to introduce a new idea in different areas at different times. Each area may accept or reject the original idea. Since areas are varied and effects of early decisions boost during the diffusion, a schedule-based strategy affects the spread of influences. 
This framework closely matches to various applications from economics to social science to public health where the original idea could be a new product, a new technology, or a new belief. 

Consider the spread of two opposing influences simultaneously. Both positive and adverse reactions to a single idea originate different flows of influences simultaneously. In this model, each \cat\ has an {\em initial preference} of $\mathcal{Y}$ or $\mathcal{N}$. The initial preference of $\mathcal{Y}$ ($\mathcal{N}$) means  the area will accept (decline) the original idea when there are no network externalities. Let $c_i$ be a non-negative number indicating how reaction of  people in area  $i$ depends on the others'. 
We call $c_i$ the {\em threshold} of area $i$. Assume the planner introduced the idea in area $i$ at time $s$. Let $m_{\yes}$ and $m_{\no}$ be the number of areas which accept or reject the idea before time $s$. If $|m_{\yes} - m_{\no}| \geq c_i$ the people in area $i$ decide based on the majority of previous adopters. It means they adopt the idea if $m_{\yes} - m_{\no} \geq c_i$ and drop it if $m_{\no} - m_{\yes} \geq c_i$. 
Otherwise, if $|m_{\yes} - m_{\no}| < c_i$ the people in area $i$ accept or reject the idea if the initial preference of area $i$ is  $\mathcal{Y}$ or $\mathcal{N}$ respectively. 
The planner does not know exact initial preferences and has  only prior knowledge about them. Formally speaking, for area $i$ the planner knows the initial preference of area $i$ will be $\mathcal{Y}$ with probability $p_i$ and will be $\mathcal{N}$ with probability $1-p_i$. We call $p_i$ the {\em initial acceptance probability} of area $i$.

We consider the problem when the planner classifies different areas into various types. The classification is based on the planner's knowledge about the reaction of people living in each area. Hence, the classification is based on different features, e.g., preferences, beliefs, education, and age such that people in areas with the same type react almost the same to the new idea. It means all areas of the same type have the same threshold $c_i$ and the same initial acceptance probability $p_i$. It is worth mentioning previous works only consider the problem when all areas have the same type, i.e., all $p_i$'s and $c_i$'s are the same \cite{Arthur89,CKP12}.
The planner wants to manage the flow of influences, and her {\em \strategy\ } is a permutation $\pi$ over different areas. 
Her goal is to find a spreading strategy $\pi$ which maximizes the expected number of adopters. 
We consider both {\em adaptive} and {\em non-adaptive} spreading strategies in this paper. In the adaptive spreading strategy, the planner can see results of earlier areas for further decisions. On the other hand, in the non-adaptive spreading strategy the planner decides about the permutation in advance.
We show the effect of a spreading strategy on the number of adopters with an example in Appendix \ref{sec:example}.

\subsection{Related Work}
We are motivated by a series of well-known studies in economics and politics literature in order to model people's behavior \cite{Arthur89,B92,BHW92,G78}. Arthur first proposed a framework to analyze people's behavior in a scenario with two competing products \cite{Arthur89}. In this model people are going to decide about one of two competing products alternatively.  He studied the problem when people are affected by all previous customers, and the planner has the same prior knowledge about people's behavior, i.e., people have the same types. He demonstrated that a cascade of influences is formed when products have positive network externalities, and early decisions determine the ultimate outcome of the market. It has been showed the same cascade arises when people look at earlier decisions, not because of network externalities, but because they have limited information themselves or even have bounded rationality to process all available data \cite{B92,BHW92}. 

Chierichetti, Kleinberg, and Panconesi argued when relations between people form an arbitrary network, the outcome of an influence propagation highly depends on the order in which people make their decisions \cite{CKP12}. In this setting, a potential spreading strategy is an ordering of decision makers. They studied the problem of finding a spreading strategy which maximizes the expected number of adopters when people have the same type, i.e., people have the same threshold $c$ and the same initial acceptance probability $p$. They proved for any $n$-node graph there is an adaptive spreading strategy with at least $O(np^c)$ adopters. They also showed for any $n$-node graph all non-adaptive spreading strategies result in at least (resp. at most) $\frac{n}{2}$ if initial acceptance probability is less (resp. greater) than $\frac{1}{2}$. 
They considered the problem on an arbitrary graph when nodes have the same type. While we mainly study the problem on a complete graph when nodes have various types, we improve their result in our setting and show the expected number of adopters for all adaptive spreading strategies is at least (resp. at most) $np$ if initial acceptance probability is $p \geq \frac{1}{2}$ (resp. $p \leq \frac{1}{2}$).
We also show the problem of designing the best spreading strategy is hard on an arbitrary graph with several types of customers. 
We prove it is $\#P$-complete to compute the expected number of adopters for a given spreading strategy.

The problem of designing an appropriate marketing strategy based on network externalities has been studied extensively in the computer science literature. For example, Kempe, Kleinberg, and Tardos \cite{KKT03} studied the following question in their seminal work: How can we influence a group of people in an investment phase in order to propagate an idea in the exploitation phase? This question was introduced by Domingos and Richardson \cite{DR}. The answer to this question leads to a marketing strategy based on seeding. There are several papers that study the same problem from an algorithmic point of view, e.g., \cite{KKT05,MR07,chen2009efficient}.
Hartline, Mirrokni, and Sundararajan \cite{HMS08} also proposed another marketing strategy based on scheduling for selling a product. Their marketing strategy is a permutation $\pi$ over customers and  price $p_i$ for customer $i$. The seller offers the product with price $p_i$ to customer $i$ at time $t$ where $t = \pi^{-1}(i)$. The goal is to find a marketing strategy which maximizes the profit of the seller. This approach is followed by several works, e.g., \cite{AEGHIMM10,AGHMMN10,AMSX09}. These papers study the behavior of an influence propagation when there is only one flow on influences in the network. In this paper, we study the problem of designing a spreading strategy when both negative and positive influences propagate simultaneously.

The propagation of competitive influences has been studied in the literature (See \cite{GK12} and its references). These works studied the influence propagation problem in the presence of competing influences, i.e., when two or more competing firms try to propagate their products at the same time. However we study the problem of influence propagation when there exist both positive and negative reactions to the same idea. There are also studies which consider the influence propagation problem in the presence of positive and negative influences \cite{new11,new13}. Che et al. \cite{new11} use a variant of the independent cascade model introduced in \cite{KKT03}. They model negative influences by allowing each person to flips her idea with a given probability $q$. Li et al. \cite{new13} model the negative influences by negative edges in the graph. Although they study the same problem, we use different models to capture behavior of people.

\subsection{Our Results}
We analyze an influence propagation phenomenon where two opposing flows of influences propagate through a social network. As a result, a mistake in the selection of early areas may result in propagation of negative influences. 
Therefore a good understanding of influence propagation dynamics seems necessary to analyze the properties of a spreading strategy. Besides the previous papers which have studied the problem with just one type \cite{Arthur89,CKP12}, we consider the scheduling problem with various types. 
%
%
Also, we mainly study the problem in a {\em full propagation} setting as it matches well to our motivations. In the full propagation setting news and influences propagate between every two areas.  
One can imagine how internet, media, and electronic devices broadcast news and influences from everywhere to everywhere. 
In the {\em partial propagation} setting news and influences do not necessarily propagate between every two areas. In the partial propagation setting the society can be modeled with a graph, where there is an edge from area $i$ to area $j$ if and only if influences propagate from area $i$ to area $j$. 

Our main focus is to analyze the problem when the planner chooses a non-adaptive spreading strategy.
Consider an arbitrary non-adaptive spreading strategy when initial preferences of all \cats\ are $p$. 
The expected number of adopters is exactly $np$ if all areas decide independently.
We demonstrate that in the presence of network influences, the expected number of adopters is greater/less than $np$ if initial acceptance probability $p$ is greater/less than $\frac{1}{2}$.
These results have a bold message: {\bf The influence propagation is an amplifier for an appealing idea and an attenuator for an unappealing idea.}
%
Chierichetti, Kleinberg, and Panconesi \cite{CKP12} studied the problem on an arbitrary graph with only one type. They proved the number of adopters is greater/less than $\frac{n}{2}$ if initial acceptance probability $p$ is greater/less than $\frac{1}{2}$. Theorem \ref{sec:bounds-spread-weak} improves their result from $\frac{n}{2}$ to $np$ in our setting. 
All missing proofs are in the full version of
the paper.
\begin{restatable}{mythm}{np}
\label{sec:bounds-spread-weak}
Consider an arbitrary non-adaptive spreading strategy $\pi$ in the full propagation setting. Assume all initial acceptance probabilities are equal to $p$. If $p \geq \frac{1}{2}$, then the expected number of adopters is at least $np$. Furthermore, If $p \leq \frac{1}{2}$, then the expected number of adopters is at most $np$.
\end{restatable}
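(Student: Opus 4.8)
My plan is to reduce the statement to a question about a one-dimensional random walk and then prove the needed inequality for that walk by induction, carrying along a tail inequality strong enough to be self-propagating. In the full propagation setting the only relevant feature of the history when the planner reaches area $\pi(t)$ is the \emph{gap} $g_{t-1} := m_{\yes} - m_{\no}$ accumulated over the first $t-1$ areas, and $g_0, g_1,\dots$ is a Markov chain on $\mathbb{Z}$ with $g_0=0$: at step $t$ it increases by $1$ if $g_{t-1}\ge c_{\pi(t)}$, decreases by $1$ if $g_{t-1}\le -c_{\pi(t)}$, and otherwise goes up with probability $p$ and down with probability $1-p$. If $Y_t$ is the indicator that $\pi(t)$ adopts, then $\mathbb{E}[\#\text{adopters}]=\sum_{t=1}^{n}\mathbb{E}[Y_t]$, and a one-line computation gives
\[
  \mathbb{E}[Y_t]\;=\;p+(1-p)\,\Pr[g_{t-1}\ge c_{\pi(t)}]-p\,\Pr[g_{t-1}\le -c_{\pi(t)}].
\]
Hence it suffices to prove $\mathbb{E}[Y_t]\ge p$ for each $t$, i.e.\ $(1-p)\Pr[g_{t-1}\ge c]\ge p\,\Pr[g_{t-1}\le -c]$ with $c=c_{\pi(t)}$; the $p\le\tfrac12$ half of the theorem then follows by the $\yes\leftrightarrow\no$ symmetry, which swaps $p$ with $1-p$ and adopters with non-adopters.

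So the whole argument rests on the following inequality for the gap walk, to be established for an arbitrary threshold sequence and any $p\ge\tfrac12$: writing $A_s(k):=\Pr[g_s\ge k]$, $B_s(k):=\Pr[g_s\le -k]$ and $r:=p/(1-p)\ge 1$,
\[
  A_s(k)\;\ge\;r\,B_s(k)\qquad\text{for all }s\ge 0\text{ and all }k\ge 1 .
\]
I would prove this by induction on $s$. The base case is trivial since $g_0=0$. For the inductive step I write the one-step transition of $g$ explicitly and split according to whether the threshold $c_{s+1}$ of the area handled at step $s+1$ satisfies $c_{s+1}>k$, $c_{s+1}=k$, or $c_{s+1}<k$. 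The point of the split is that for $k<c_{s+1}$ the forcing rule does not change the events $\{g_{s+1}\ge k\}$ or $\{g_{s+1}\le -k\}$, so there $A_{s+1}(k)=pA_s(k-1)+(1-p)A_s(k+1)$ and $B_{s+1}(k)=(1-p)B_s(k-1)+pB_s(k+1)$, exactly as for an unconstrained walk, while for $c_{s+1}\le k$ the formulas collapse to even simpler combinations such as $A_{s+1}(k)=A_s(k-1)$. In every case the target $A_{s+1}(k)\ge rB_{s+1}(k)$ falls out of the inductive hypothesis together with the elementary facts $r(1-p)=p$ and $p\ge 1-p$, the telescoping identities $A_s(0)=1-B_s(1)$ and $B_s(0)=1-A_s(1)$, and the monotonicity of $k\mapsto A_s(k)$ and $k\mapsto B_s(k)$.

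The step I expect to be most delicate is the boundary case $k=1$ with $c_{s+1}\ge 2$, where a purely termwise use of the hypothesis is not quite enough: there $A_{s+1}(1)=pA_s(0)+(1-p)A_s(2)$ with $A_s(0)=1-B_s(1)$, and closing the inequality requires using $A_s(1)\ge rB_s(1)$, $A_s(2)\ge rB_s(2)$ and $B_s(1)\ge B_s(2)$ together to absorb the cross terms. (If thresholds equal to $0$ are allowed, one should additionally carry the companion bound $\Pr[g_s\ge 0]\ge p$ through the induction; it propagates by the same kind of case analysis and handles areas that are forced to adopt whenever the running gap is nonnegative.) It is worth noting why cheaper routes fail, which is exactly what forces the tail formulation above: the pointwise skew $\Pr[g_s=j]\ge r^{\,j}\Pr[g_s=-j]$ that holds for an unconstrained walk is \emph{false} for the constrained one, because forced rejections make the walk extra sticky at very negative gaps; and a potential/amortization scheme of the form $\mathbb{E}[\text{final gap}\mid\text{state }g,\text{ remaining areas }S]\ge\psi(g)+|S|(2p-1)$ cannot exist, since the constraint a forced rejection puts on $\psi$ at a negative $g$ is incompatible with the one a free step puts on $\psi$ there once $p>\tfrac12$ (and the base case $|S|=0$ forces $\psi(g)\le g$, ruling out the only remaining candidates). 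Passing from pointwise control to the tail bound $A_s(k)\ge rB_s(k)$ is precisely what makes the induction self-sustaining, after which summing $\mathbb{E}[Y_t]\ge p$ over $t$ finishes the proof.
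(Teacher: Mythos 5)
Your proposal is correct and follows essentially the same route as the paper's own proof: both reduce the per-area acceptance bound to the tail-ratio inequality $\Pr[I_s\ge k]\ge \frac{p}{1-p}\Pr[I_s\le -k]$ for the gap walk and establish it by induction on the number of scheduled areas, with a case analysis handling the threshold/boundary/parity issues (the paper cases on $x$ relative to $n$ and uses a monotone-in-$\beta$ argument, while you case on $c$ versus $k$, but the content is the same). The delicate $k=1$ boundary step you flag corresponds exactly to the paper's Case~2 ($x=0$), and your sketch of how to close it is sound.
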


Chierichetti, Kleinberg, and Panconesi \cite{CKP12} studied the problem of designing an optimum spreading strategy in the partial propagation setting. They design an approximation algorithm for the problem when the planner has the same  prior knowledge about all areas, i.e., all areas have the same type. We study the same problem with more than one type.
%
We first consider the problem in the full propagation setting.
%
%
One approach is to consider a non-adaptive spreading strategy with a constant number of switches between different types. 
The planner has the same prior knowledge about areas with the same type. It means areas with the same type are identical for the planner. Thus any spreading strategy can be specified by types of areas rather than areas themselves. Let $\tau(i)$ be the type of area $i$ and $\tau(\pi)$ be the sequence of types for spreading strategy $\pi$. For a given spreading strategy $\pi$ a {\em switch} is a position $k$ in the sequence such that $\tau(\pi(k)) \neq \tau(\pi(k+1))$. 
As an example consider a society with $4$ areas. Areas $1$ and $2$ are of type $1$. Areas $3$ and $4$ are of type $2$. Then spreading strategy $\pi_1=(1, 2, 3, 4)$ with $\tau(\pi_1)=(1, 1, 2, 2)$ has a switch at position $2$ and spreading strategy $\pi_2=(1, 3, 2, 4)$ with $\tau(\pi_2)=(1, 2, 1, 2)$ has switches at positions $1$, $2$, and $3$.
\begin{restatable}{mythm}{switch}
\label{thm:switch}
A $\sigma$-switch spreading strategy is a spreading strategy with at most $\sigma$ switches.
For any constant $\sigma$, there exists a society with areas of two types such that no $\sigma$-switch spreading strategy is optimal.
\end{restatable}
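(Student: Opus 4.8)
The plan is to exhibit, for each constant $\sigma$, a two-type society whose optimal non-adaptive \strategy\ has strictly more than $\sigma$ switches. I would build the instance along the following lines: one type is ``reliable'' --- small threshold and strong $\yes$-bias, so that as soon as the signed lead $d=m_{\yes}-m_{\no}$ leaves $0$ this type moves deterministically with the current majority --- and the other is ``cautious'' --- large threshold and weak $\yes$-bias, so that it still consults its own initial preference over a wide band of leads and therefore behaves like a biased gambler's-ruin walk inside that band. Fixing the counts $n_A,n_B$ and the parameters as functions of $\sigma$, the target is that the unique (essentially) optimal schedule is a long \emph{staircase}: alternating blocks of the two types whose lengths are tuned so that, immediately after each block, $d$ sits inside a narrow corridor in which only the next type of the staircase behaves favorably, and any coarser ``few-block'' schedule loses a definite amount of expected mass.

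I would organize the argument in four steps. \emph{(1) Reduction.} Exactly as in the analysis behind Theorem~\ref{sec:bounds-spread-weak}, the only state that matters when area $\pi_t$ decides is $d_{t-1}=m_{\yes}-m_{\no}$, the expected number of adopters equals $\tfrac12\bigl(n+\mathbb{E}[d_n]\bigr)$, each type's one-step distribution is stochastically monotone in the current lead, and since equal-type \cats\ are interchangeable a \strategy\ is just an ordering of the multiset $\{A^{n_A},B^{n_B}\}$ whose number of switches is the number of $A$--$B$ adjacencies of that binary word. \emph{(2) Lower bound for the staircase.} Write down $\pi^{*}$ explicitly and track $d_t$ block by block, showing that with the chosen block lengths $d$ stays inside the favorable corridor with high probability, so every block contributes essentially its full positive increment; this yields a clean lower bound $\mathrm{OPT}\ge V$. \emph{(3) Exchange/structural lemma.} Show that in any schedule, merging two same-type blocks that straddle a block of the other type (or sliding a block to an endpoint) changes $\mathbb{E}[d_n]$ by a quantity that, for our instance, is strictly negative whenever the straddled block is short relative to the corridor width. \emph{(4) Conclusion.} Any schedule with at most $\sigma$ switches has at most $\sigma+1$ blocks, so by pigeonhole on the total length it either contains a same-type pair straddling a short block, or has an all-one-type prefix long enough that $d$ overshoots the corridor and wastes mass; either way Step~(3) forces its value below $V$, so no $\sigma$-switch \strategy\ is optimal.

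The hard part will be Step~(3), and in particular making the improvement \emph{quantitative and uniform} over all $O(\sigma)$-block competitors: unlike a first-order drift comparison (which by monotonicity of the per-step bias in $d$ can only ever justify a bounded number of blocks), the advantage of the staircase is a second-order effect --- a cautious block placed just below its threshold acts as a biased gambler's-ruin walk that, with good probability, escapes upward and is then ``locked in'' by the following reliable block --- so the comparison between a few-block schedule and $\pi^{*}$ must compare the full distribution of the lead, not just its expectation. I would do this by coupling the two lead processes up to the first time they diverge and then bounding the subsequent discrepancy using stochastic monotonicity of each type's step in the current lead, which lets a single local improvement propagate to the end. A secondary obstacle is boundary bookkeeping: guaranteeing that the corridor is actually reached from the empty start, and that the leftover \cats\ after the last full block of the staircase cannot be rearranged to recover the lost mass; I would dispose of this by choosing $n_A,n_B$ so that the staircase consumes every area exactly.
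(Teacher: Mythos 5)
Your high-level plan --- exhibit a two-type instance whose optimal schedule alternates $\Omega(n)$ times, then show that every schedule with $O(\sigma)$ blocks admits a strictly improving local modification --- has the same shape as the paper's argument. But as written the proposal has a genuine gap: all of the quantitative content is deferred to your Step~(3), which you yourself flag as open, and the parameters of the instance are never fixed, so there is nothing to verify. Moreover, Step~(4) does not line up with Step~(3) as stated: a $\sigma$-switch schedule has at most $\sigma+1$ blocks and hence only \emph{long} blocks, so it contains no ``same-type pair straddling a short block'' to merge; the improving move you actually need is an \emph{insertion} (splitting a long block with an area of the other type taken from elsewhere), which changes the type counts available downstream and hence the whole suffix of the schedule. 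Closing that comparison requires either a coupling that propagates the local gain to time $n$ (you gesture at this but do not carry it out), or, as the paper does, first proving that the fully alternating schedule on the balanced instance upper-bounds the value of \emph{every} schedule over \emph{every} split of the $2n$ areas into the two types, so that the chain ``$\sigma$-switch $<$ locally modified $\le$ alternating'' closes.

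It is also worth noting that the corridor/gambler's-ruin machinery is unnecessary. The paper takes both types with the \emph{same} initial acceptance probability $P>\frac12$ and thresholds $1$ and $2$. Then at every even time the lead is either $0$ or already at least $2$ in absolute value (in which case all remaining decisions are forced), and the exchange lemma collapses to a two-step computation at lead $0$: scheduling (threshold-$1$, threshold-$1$) makes the second area a follower and locks in the first area's coin flip, worth $mP$ for the $m$ remaining areas, whereas (threshold-$1$, threshold-$2$) gives the second area an independent retry, worth at least $mP^2+2P(1-P)\bigl(1+(m-2)P\bigr)>mP$ --- where the continuation value $(m-2)P$ after the lead returns to $0$ is supplied directly by Theorem~\ref{sec:bounds-spread-weak}. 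To salvage your route you would have to pin down concrete parameters and prove your Step~(3) uniformly over all $O(\sigma)$-block competitors; the simpler instance reduces all of that to a single inequality.
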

We construct a society with $n$ areas with $\frac{n}{2}$ areas of type $1$ and $\frac{n}{2}$ areas of type $2$. We demonstrate an optimal non-adaptive spreading strategy should switch at least $\Omega(n)$ times. It means no switch-based non-adaptive spreading strategy can be optimal. 
We prove Theorem \ref{thm:switch} formally in Appendix \ref{sec:switch}.

%

On the positive side, we analyze the problem when thresholds are drawn independently from an unknown distribution and initial acceptance probabilities are arbitrary numbers. We characterize the optimal non-adaptive spreading strategy in this case.
\begin{restatable}{mythm}{iidc}
\label{thm-iidc}
Assume that the planner's prior knowledge about all values of $c_i$'s is the same, i.e., all $c_i$'s are drawn independently from the same but unknown distribution. Let initial acceptance probabilities be arbitrary numbers.  Then, the best non-adaptive spreading strategy is to order all areas in non-increasing order of their initial acceptance probabilities.
\end{restatable}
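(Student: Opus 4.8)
The plan is a bubble-sort exchange argument. I would show that whenever a non-adaptive \strategy\ places an \cat\ $i$ immediately before an \cat\ $j$ with $p_i<p_j$, swapping the two does not decrease the expected number of adopters; iterating such swaps turns an arbitrary \strategy\ into the one that orders \cats\ by non-increasing $p$, so that order is optimal. To set up the swap, suppose $i$ sits at position $t$ and $j$ at position $t+1$, and write $D_k=m_{\yes}-m_{\no}$ for the signed surplus of acceptances after the first $k$ \cats\ react. The swap leaves positions $1,\dots,t-1$ and the law of $D_{t-1}$ unchanged; the reactions at $t,t+1$ use fresh thresholds independent of $D_{t-1}$; and --- this is exactly where full propagation enters --- the reactions of \cats\ $t+2,\dots,n$ are governed by $D_{t+1}$ together with their own (\iid) thresholds and preferences. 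Hence, conditioning on $D_{t-1}=d$, the expected number of adopters equals a swap-independent quantity plus $E[\,\tfrac{D_{t+1}}{2}+g(D_{t+1})\mid d\,]$, where $g(\delta)$ is the expected number of adopters the fixed suffix produces when started from surplus $\delta$ (the same function before and after the swap, since neither the suffix nor the law of its thresholds changes); here I used that the two \cats\ at $t,t+1$ contribute exactly $\tfrac{D_{t+1}-d}{2}+1$ adopters. So the theorem reduces to two claims: \textbf{(A)} for every fixed $d$, processing an \cat\ with acceptance probability $a_1$ and then one with $a_2\le a_1$ (two \iid\ thresholds) makes $D_{t+1}$ stochastically dominate the reverse order; and \textbf{(B)} $\delta\mapsto g(\delta)$ is non-decreasing, so that $\tfrac{\delta}{2}+g(\delta)$ is non-decreasing and the dominance in (A) can only raise the conditional expectation.

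For (A), since $D_{t+1}-d\in\{-2,0,2\}$, stochastic dominance is equivalent to $\Pr[D_{t+1}=d+2]$ being at least as large, and $\Pr[D_{t+1}=d-2]$ at most as large, as under the reversed order. Let $\rho(\delta,a)$ be the probability that an \cat\ with acceptance probability $a$ adopts at surplus $\delta$ with a fresh threshold $c\sim F$: explicitly $\rho(\delta,a)=a+(1-a)F(\delta)$ for $\delta\ge 1$, $\rho(0,a)=a$, and $\rho(\delta,a)=(1-F(-\delta))\,a$ for $\delta\le -1$. Independence of the two thresholds makes ``both adopt'' a product, $\rho(d,a_1)\rho(d+1,a_2)$ in one order against $\rho(d,a_2)\rho(d+1,a_1)$ in the other, and the needed inequality reduces, in the generic range $d\ge 1$ (with $x=F(d)\le y=F(d+1)$), to $(y-x)(a_1-a_2)\ge 0$; the cases $d\in\{0,-1\}$ and $d\le -2$ follow by direct substitution (with equality in the last two). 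The ``both reject'' inequality comes for free from the $\yes\!\leftrightarrow\!\no$ symmetry of the model, which sends ``reject with probability $a$ at surplus $\delta$'' to ``adopt with probability $1-a$ at surplus $-\delta$'' and turns $a_1\ge a_2$ into $1-a_2\ge 1-a_1$, so it is just an instance of the case already proved.

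For (B), I would couple two copies of the suffix, one started from surplus $d$ and one from $d+1$, feeding both the same threshold draws and the same preference realizations, and establish the invariant $D^{(d+1)}_k-D^{(d)}_k\ge 1$ for every $k$. The gap starts at $1$, and each \cat\ shifts each surplus by $\pm 1$, so the gap is always odd; hence the only way it could drop below $1$ would be to pass from $1$ to $-1$, i.e.\ the lower copy adopting while the higher rejects --- and a short case analysis of the reaction rule rules this out: if the lower copy adopts (whether by the threshold rule or because its preference is $\yes$), then the higher copy, facing a surplus one larger, also adopts. Since a suffix of $L$ \cats\ produces exactly $\tfrac{L+(D_{\mathrm{final}}-D_{\mathrm{start}})}{2}$ adopters, the copy started one higher always yields at least as many, which gives $g(d+1)\ge g(d)$, hence $g$ is non-decreasing on $\mathbb{Z}$.

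The step I expect to be the real obstacle is (A). The natural attempt is to couple the two orderings step by step and obtain a pointwise comparison of $D_{t+1}$, but this genuinely fails: along some sample paths the reordering helps substantially (when the threshold rule ``kicks in'' for the second \cat\ under one order but not the other) while along others it hurts slightly, so only the aggregate comparison through the two extreme events ``both adopt''/``both reject'' survives --- which is why the algebraic route above, rather than a monotone coupling, is needed. (Throughout I take the reactions to be well defined, e.g.\ $c_i\ge 1$, or $F$ with no atom at $0$ and ties at $D=0$ broken consistently; this does not affect the argument.)
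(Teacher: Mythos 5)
Your proposal is correct and follows essentially the same route as the paper: an adjacent-transposition exchange argument that conditions on the surplus before the swapped pair, establishes first-order stochastic dominance of the post-pair surplus by comparing the product-form probabilities of ``both adopt'' and ``both reject'' under the common threshold distribution with a case analysis on the sign of the surplus (exactly the paper's Cases 1--3), and then converts that dominance into a comparison of the expected number of adopters. The only cosmetic difference is the finishing step: the paper's Lemma~\ref{lem-triv} propagates the tail-probability dominance $Pr(I_k \geq x)$ forward through the remaining schedule, whereas you prove monotonicity of the suffix value function $g(\delta)$ by a pathwise coupling --- two interchangeable ways of establishing the same fact.
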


%
We also study the problem of designing the optimum spreading strategy in the partial propagation setting with more than one types. We show it is hard to determine the expected number of adopters for a given spreading strategy.
Formally speaking, we show it is $\#P$-complete to compute the expected number of adopters for a given spreading strategy $\pi$ in the partial propagation setting with more than one type. 
This is another evidence to show the influence propagation is more complicated with more than one type. 
We prove Theorem \ref{thm:hardness} based on a reduction from a variation of the {\em network reliability} problem in Appendix \ref{sec:hardness}.

\begin{restatable}{mythm}{hardness}
\label{thm:hardness}
In the partial propagation setting, it is $\#P$-complete to compute the expected number of adopters for a given non-adaptive spreading strategy $\pi$.
\end{restatable}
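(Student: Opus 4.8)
The plan is to establish the two halves of $\#P$-completeness separately. Membership is routine: if each $p_i = a_i/b_i$ in lowest terms and $D=\prod_i b_i$, then $D$ times the expected number of adopters is the sum, over the $D$ equally likely joint ``extended-coin'' outcomes, of the number of areas that accept; and for a fixed outcome the cascade is deterministic and simulable in polynomial time. A nondeterministic machine that guesses an outcome, simulates the cascade, and then forks into one accepting path per adopter witnesses $D\cdot(\text{expected number of adopters})\in\#P$, and $D$ is computable from the input, so the quantity is $\#P$-easy.

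For hardness I would reduce from a counting variant of network reliability: given a DAG $G=(V,E)$ with a source $s$, a sink $t$, and each edge present independently with probability $\tfrac12$, compute $\mathrm{Rel}_G:=\Pr[\,s\leadsto t\,]$ (equivalently, count the edge-subsets in which $s$ reaches $t$), which is a classical $\#P$-complete problem. From $G$ I would build an influence instance whose graph is itself a DAG, with $\pi$ a topological order of it, using three kinds of areas: (i) an \emph{edge area} $X_e$ for each $e\in E$, with no in-neighbours and $p_{X_e}=\tfrac12$, so ``$X_e$ accepts'' $\equiv$ ``$e$ present''; (ii) \emph{constant} areas — a node with no in-neighbours and $p=1$ always accepts, one with $p=0$ always rejects; (iii) for each $v\in V$ a \emph{reachability} area $y_v$ and for each $(u,v)\in E$ a \emph{conjunction} area $z_{uv}$, wired (all edges directed into the gate area, threshold $c=1$, plus an appropriate number of constant in-neighbours) so that $z_{uv}$ realises $y_u\wedge X_{uv}$, $y_v$ realises $\bigvee_{(u,v)\in E} z_{uv}$, $y_s$ is constant-accept, and $y_v$ for a non-source of in-degree $0$ is constant-reject. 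The arithmetic to check is tiny: in each gadget $m_\yes-m_\no$ is always odd, so ties never arise and the gadget genuinely computes the intended $\mathrm{AND}$/$\mathrm{OR}$; an induction along the topological order of $G$ then gives that $y_v$ accepts exactly when $s$ reaches $v$ in the realised subgraph.

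The hard part — and the reason the construction needs this bookkeeping rather than a one-line ``let the output area's probability be $\mathrm{Rel}_G$'' — is that the expected number of adopters sums the acceptance probability of \emph{every} area, including all the auxiliary $y_v$ and $z_{uv}$ gates, and each of those is itself a reachability-type quantity we cannot evaluate. I would track those contributions exactly: since $\Pr[R_u]$ (the event $s\leadsto u$) depends only on edges preceding $u$ in the topological order, it is independent of $X_{uv}$, so $\Pr[z_{uv}\text{ accepts}]=\tfrac12\Pr[R_u]$; hence the expected number of adopters on this instance equals a constant $K(G)$ — the number of constant-accept areas plus $|E|/2$ from the $X_e$'s — plus $\sum_{v\in V}\bigl(1+\tfrac12 d^+(v)\bigr)\Pr_G[R_v]$, a weighted sum of reachability probabilities with explicitly known weights.

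Finally I would cancel the unwanted terms. Assume (without loss of generality) $t$ is a sink, and let $G'$ be $G$ with one extra sink $t_1$ and the edge $(t,t_1)$; the new area/edge is downstream of $t$, so $\Pr_{G'}[R_v]=\Pr_G[R_v]$ for $v\in V$, $\Pr_{G'}[R_{t_1}]=\tfrac12\Pr_G[R_t]$, and only $t$'s out-degree changes. A short computation gives $\bigl(\text{expected adopters on }I_{G'}-K(G')\bigr)-\bigl(\text{expected adopters on }I_G-K(G)\bigr)=\Pr_G[\,s\leadsto t\,]$. Thus two oracle calls for the expected number of adopters compute $\mathrm{Rel}_G$, which is $\#P$-hard; together with membership this yields $\#P$-completeness. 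The reduction in fact shows hardness already when every threshold equals $1$, the graph is a DAG, and every $p_i\in\{0,\tfrac12,1\}$.
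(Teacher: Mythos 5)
Your proof is correct, and while it follows the same top-level strategy as the paper's --- reduce from a network-reliability problem by building a cascade that simulates reachability, then untangle the fact that the objective sums acceptance probabilities over \emph{all} areas rather than just the one of interest --- the implementation differs at every key technical choice, so a comparison is worthwhile. The paper starts from Provan's node-failure reliability problem on acyclic planar degree-$3$ graphs and splits each vertex $v$ into a blocking vertex $b_v$ (threshold equal to $v$'s in-degree, $p=1$), a forwarding vertex $f_v$ ($p$ equal to the survival probability), and $v'$ itself ($p=0$, $c=2$); this keeps the constructed graph planar with maximum degree $3$ and only $4$ types, and the auxiliary contributions are eliminated algebraically via the identities $Pr(X_{b_v}=1)=Pr(X_{v'}=1)/p$ and $Pr(X_{f_v}=1)=p$, followed by a vertex-deletion differencing lemma. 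You instead start from edge-failure $s$--$t$ reliability on a DAG with $p=\tfrac12$, build explicit AND/OR gates with threshold $1$ padded by constant areas so that the decision variable is always odd and ties never occur, express the auxiliary gates' contributions in closed form as $\tfrac12 d^{+}(v)\Pr[R_v]$ via the independence of $R_u$ from the outgoing edge $(u,v)$ in a DAG, and cancel with a pendant-sink differencing of two oracle calls; this yields hardness with all thresholds equal to $1$ and $p_i\in\{0,\tfrac12,1\}$ (hence only three types), at the cost of losing planarity and bounded degree. Two further remarks: your starting problem should carry a citation (e.g., Valiant's $s$--$t$ connectedness or Provan--Ball for acyclic directed reliability), and your membership argument --- clearing denominators so that $D\cdot E(I_n)$ counts accepting paths of a machine that guesses the extended coins, simulates the deterministic cascade, and branches once per adopter --- supplies the ``$\#P$-easy'' half of completeness that the paper asserts but never actually proves, which is a genuine improvement.
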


We also present a polynomial-time algorithm to compute the expected number of adopters for a given non-adaptive spreading strategy in a full propagation setting.
We design an algorithm in order to simulate the amount of propagation for a given spreading strategy in Appendix \ref{sec:comp-expect-numb}. 
\begin{restatable}{mythm}{givenschedule}
\label{thm:compute_dp}
Consider a full propagation setting. The expected number of adopter can be computed in polynomial time for a given non-adaptive spreading strategy $\pi$.  
\end{restatable}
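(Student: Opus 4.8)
The plan is to reduce the cascade to a one-dimensional random walk and then evaluate it by dynamic programming. The key observation is that, in the full propagation setting, the decision of \cat\ $\pi_t$ depends on the history only through the numbers $m_\yes$ and $m_\no$ of earlier \cats\ that accepted and rejected the idea. Because the \cats\ decide in the fixed order $\pi_1,\dots,\pi_n$, after the first $t$ decisions we always have $m_\yes+m_\no=t$, so the relevant state collapses to the single integer $d_t:=m_\yes-m_\no$, which satisfies $|d_t|\le t$ and $d_t\equiv t\pmod 2$. By the definition of the model, conditioned on $d_{t-1}$ the decision of $\pi_t$ is independent of \emph{which} earlier \cats\ accepted: if $|d_{t-1}|\ge c_{\pi_t}$ it follows the current majority deterministically (so $d_t=d_{t-1}+1$ when $d_{t-1}\ge c_{\pi_t}$ and $d_t=d_{t-1}-1$ when $d_{t-1}\le -c_{\pi_t}$, with a tie resolved in favor of acceptance when $c_{\pi_t}=0$), and if $|d_{t-1}|<c_{\pi_t}$ it accepts with probability $p_{\pi_t}$ and rejects with probability $1-p_{\pi_t}$. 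Thus the cascade is a time-inhomogeneous Markov chain whose state space at step $t$ has only $O(t)=O(n)$ states.

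Given this, I would define $P_t(d)$ to be the probability that $d_t=d$ after the first $t$ \cats\ have decided, set $P_0(0)=1$, and compute the arrays $P_1,P_2,\dots,P_n$ in turn. To obtain $P_{t+1}$ from $P_t$, for every reachable $d$ I push the mass $P_t(d)$ forward according to the rule for \cat\ $\pi_{t+1}$ above: all of it to $d+1$ if $d\ge c_{\pi_{t+1}}$, all of it to $d-1$ if $d\le -c_{\pi_{t+1}}$, and otherwise a $p_{\pi_{t+1}}$ fraction to $d+1$ and a $1-p_{\pi_{t+1}}$ fraction to $d-1$. Each step touches $O(n)$ entries with $O(1)$ arithmetic per entry, so the whole table is filled in $O(n^2)$ arithmetic operations; since the probabilities that appear are products of at most $n$ of the given rationals $p_i$, their bit length also stays polynomial, and the algorithm runs in polynomial time. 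Finally, in any outcome whose final state is $d$ the number of adopters equals $(n+d)/2$, so the expected number of adopters is $\sum_{d}P_n(d)\cdot\frac{n+d}{2}=\tfrac12\bigl(n+\mathbb{E}[d_n]\bigr)$, read directly off $P_n$.

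The only real content is the state-collapse observation of the first paragraph; after that the procedure is a textbook one-dimensional dynamic program, so I do not expect a serious obstacle. The step that deserves the most care is verifying that conditioning on the counts $(m_\yes,m_\no)$ — rather than on the full decision history — genuinely pins down the conditional distribution of the next \cat's decision, together with the boundary cases $c_i=0$, ties $m_\yes=m_\no$, and the very first \cat\ (for which $m_\yes=m_\no=0$). This is immediate from the model's definition, but it is precisely the property that breaks once the influence network is an arbitrary graph, where the identities of the adopters matter and computing the expectation becomes $\#P$-complete by Theorem~\ref{thm:hardness}; here the extra structure of full propagation is exactly what makes the polynomial-time computation possible.
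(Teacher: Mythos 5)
Your proof is correct and follows essentially the same route as the paper: the paper's Appendix defines $A[k,x]=Pr(I_k=x)$ for the running difference $I_k$ (your $d_k$) and fills the $n\times(2n+1)$ table with exactly the transition rule you describe, then reads off $E(I_n)$ and hence the expected number of adopters via $Y_n=(I_n+n)/2$. The only cosmetic difference is that you push mass forward while the paper writes the pull-form recurrence, and your aside about $c_i=0$ is moot since the paper assumes $c_v\in\mathbb{Z}_{>0}$.
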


At last we study the problem of designing the best adaptive spreading strategy. We overcome the hardness of the problem and design a polynomial-time algorithm to find the best adaptive marketing strategy in the following theorem. We describe the algorithm precisely in Appendix \ref{sec:best_schedule_dp}.
\begin{restatable}{mythm}{adaptive_dp}
\label{thm:best_schedule_dp}
A polynomial-time algorithm finds the best adaptive spreading strategy for a society with a constant number of types. 
\end{restatable}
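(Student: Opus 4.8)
The plan is to recognize the optimal adaptive \strategy\ problem in the full propagation setting as a finite-horizon Markov decision process with a polynomially sized state space, and then solve it by backward induction. The first step is to identify the right notion of ``state.'' At any point during the cascade the only information the planner can act on is (i) the vector $\mathbf{r}=(r_1,\dots,r_k)$ giving, for each of the $k$ types, how many \cats\ of that type remain unscheduled, and (ii) the signed gap $d=m_\yes-m_\no$ between the numbers of \cats\ that have already accepted and already rejected. I would argue that $(\mathbf{r},d)$ is a sufficient statistic for the continuation: \cats\ of a common type are exchangeable (same threshold, same initial acceptance probability, independent initial preferences), so their identities are immaterial and only the counts $\mathbf{r}$ matter; and because we are in the full propagation setting, the reaction of any \cat\ scheduled later depends on the past only through the current value of $d$ (follow the majority when $|d|\ge c$, otherwise use the own initial preference). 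Consequently, under any continuation policy the distribution of the remaining cascade is a function of $(\mathbf{r},d)$ alone, and — by the standard theory of finite-horizon MDPs — there is an optimal policy that is deterministic and depends on the history only through $(\mathbf{r},d)$.

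Granting this, I would define the value function $V(\mathbf{r},d)$ to be the maximum, over adaptive continuation policies, of the expected number of adopters among the still-unscheduled \cats. Then $V(\mathbf{0},d)=0$ and, for $\mathbf{r}\neq\mathbf{0}$,
\[
V(\mathbf{r},d)\;=\;\max_{j\,:\,r_j>0}\; f_j(\mathbf{r},d),
\]
where, writing $e_j$ for the $j$-th unit vector and $c_j,p_j$ for the threshold and initial acceptance probability of type $j$: $f_j=1+V(\mathbf{r}-e_j,\,d+1)$ if $d\ge c_j$ (the \cat\ accepts by majority); $f_j=V(\mathbf{r}-e_j,\,d-1)$ if $d\le -c_j$ (it rejects by majority); and otherwise $f_j=p_j\bigl(1+V(\mathbf{r}-e_j,\,d+1)\bigr)+(1-p_j)\,V(\mathbf{r}-e_j,\,d-1)$. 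The quantity we want is $V(\mathbf{n},0)$, where $\mathbf{n}$ is the vector of type sizes, and the optimal adaptive \strategy\ is recovered by always scheduling, at the observed state, the type $j$ attaining the maximum.

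Finally, I would check the running time. Each coordinate of $\mathbf{r}$ lies in $\{0,\dots,n\}$ and $|d|\le n$, so the table has $O(n^{k+1})$ entries; each is filled from at most $k$ candidate successors in $O(1)$ time apiece, for a total of $O(k\,n^{k+1})$ — polynomial whenever $k$ is constant. Evaluating states in order of decreasing $\sum_i r_i$ (equivalently, memoized recursion) respects the dependency structure. The step I expect to be the real obstacle is the sufficiency-of-state claim in the first paragraph: one must argue carefully that no finer feature of the realized history (beyond the single number $d$ and the surviving type counts) can ever influence a future reaction, and that allowing randomized or fully history-dependent policies cannot beat the best $(\mathbf{r},d)$-measurable one; everything after that is routine dynamic programming. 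It is worth noting that this is precisely where the full propagation hypothesis is used — the same feature that makes Theorem~\ref{thm:compute_dp} tractable — whereas in the partial propagation setting the relevant history is far richer, consistent with the hardness in Theorem~\ref{thm:hardness}.
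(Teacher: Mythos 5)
Your proposal is essentially the paper's own argument: the paper defines $B(n_1,n_2,k)$ with $k$ the ``deployment number'' (your $d$) and the remaining type counts, gives the same case analysis on $k$ versus the thresholds with a max over which type to schedule next, and reports the same $O(n^{t+1})$ complexity for $t$ types. Your explicit discussion of why $(\mathbf{r},d)$ is a sufficient statistic is a welcome addition the paper leaves implicit, but the approach is the same.
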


\newcommand{\fa}{~\forall}
\newcounter{thm}
\newcounter{c}
\newtheorem{lem}[thm]{Lemma}	   
\newtheorem{fact}[thm]{Fact}	   
\newcounter{defc}		   
\newtheorem{defn}[defc]{Definition}
\newcommand{\nop}[1]{{}}
\newcommand{\lref}[1]{Lemma \ref{#1}}
\newcommand{\induction}[3]{\par\text{\textsl{Induction Hypothesis:} }#1
\par\text{\textsl{Basis:} }#2
\par\text{\textsl{Induction Step:} }#3}

\section{Notation and Preliminaries}
In this section we define basic concepts and notation used throughout this paper. We first formally define the spread of influence through a network as a stochastic process and then give the intuition behind the formal notation.
We are given a graph $G=(V,E)$ with thresholds, $c_v \in \mathbb{Z}_{>0},\forall v \in V$ and initial acceptance probabilities $p_v \in [0,1],\forall v \in V$. Let $|V| = n$. Let $d_v$ be the degree of vertex $v$. Let $N(v)$ be the set of neighboring vertices of $v$. Let $\boldsymbol{c}$ be the vector $(c_1,\ldots,c_n)$ and $\boldsymbol{p}$ be the vector $(p_1, \ldots, p_n)$. Given a graph $G=(V,E)$ and a permutation $\pi:V \mapsto V$, we define a discrete stochastic process, \emph{IS} (Influence Spread) as an ordered set of random variables $(X^1, X^2, \ldots, X^n)$, where $X^t \in \Omega = \{-1,0,1\}^n, \forall t \in \{1,\ldots,n\}$. The random variable $X^t_v$ denotes decision of area $v$ at time $t$. If it has not yet been scheduled, $X^t_v=0$. If it accepts the idea then $X^t_v=1$, and if it rejects the idea then $X^t_v=-1$. Note that $X^t_v=0$ iff $t < \pi^{-1}(v)$. Let $D(v) = \sum_{u \in N(v)}X^{\pi^{-1}(v)}_u$ be the sum of decision's of $v$'s neighbors. For simplicity in notation, we denote $X_v^n$ by $X_v$.

We now briefly explain the intuition behind the notation. The input graph models the influence network of areas on which we want to schedule a cascade, with each vertex representing an area. There is an edge between two vertices if two corresponding areas influence each others decision. The influence spread process models the spread of idea acceptance and rejection for a given spreading strategy. The permutation $\pi$ maps a position in spreading strategy to an area in $V$. For example, $\pi(1)=v$ implies that $v$ is the first area to be scheduled. Once the area $v$ is given a chance to accept or reject the idea at time $\pi^{-1}(v)$, $X^{\pi^{-1}(v)}_v$ is assigned a value based on $v$'s decision and at all times $t$ after $\pi^{-1}(v)$, $X^t_v = X^{\pi^{-1}(v)}_v$. The random variable $X_v$ denotes whether an area $v$ accepted or rejected the idea. We note that $X^{t}_v = X_v, \forall t \geq \pi^{-1}(v)$. The random variable $X^t$ is complete snapshot of the cascade process at time $t$. The variable $D(v)$ is the decision variable for $v$. It denotes the sum of decisions of $v$'s neighbors at the time $v$ is scheduled in the cascade and it determines whether $v$ decides to follow the majority decision or whether $v$ decides based on its initial acceptance probability. The random variable $I_t$ is the sum of decisions of all areas at time $t$. Thus, $I_n$ is the variable we are interested in as it denotes the difference between number of people who accept the idea and people who reject the idea.

Let $v = \pi(t)$. Given $X^{t-1}$, $X^t$ is defined as follows:
\begin{itemize}
\item  Every area decides to accept or reject the idea exactly once when it is scheduled and its decision remains the same at all later times. Therefore $\forall i \ne \pi(t)$:
  \begin{itemize}
  \item   $X^t_i = X^{t-1}_i$
  \end{itemize}
\item  Decision of area $v$ is based on decision of previous areas if its threshold is reached.
  \begin{itemize}
  \item $X^{t}_{v}  = 1 \text{ if }D(v) \geq c_v$
  \item $X^{t}_{v}  = -1  \text{ if }D(v) \leq -c_v$
  \end{itemize}
\item If threshold of area $v$ is not reached, then it decides to accept the idea with probability $p_v$, its initial acceptance probability, and decides to reject it with
  probability $1-p_v$.
\end{itemize}
%

In partial propagation setting, we represent such a stochastic process by tuple $IS=(G, \boldsymbol{c}, \boldsymbol{p}, \pi)$. For full propagation setting, the underlying graph is a complete graph and hence we can denote the process by $(\boldsymbol{c}, \boldsymbol{p}, \pi)$. When $\boldsymbol{c}$ and $\boldsymbol{p}$ are clear from context, we denote the process simply by spreading strategy, $\pi$. We define random variable $I_t = \sum_{v \in V}X^t_v$. We denote by $q_v = 1 - p_v$ the probability that $v$ rejects the idea based on initial preference. We denote by $Pr(A; IS)$, the probability of event $A$ occurring under stochastic process $IS$. Similarly, we denote by $E(z; IS)$, the expected value of random variable $z$ under the stochastic process $IS$.

\section{A Bound on Spread of Appealing and Unappealing Ideas}
\label{sec:np}
Lets call an idea unappealing if its initial acceptance probability for all areas is $p$ for some $p \leq \frac{1}{2}$. We prove in this section, that for such ideas, no strategy can boost the acceptance probability for any area above $p$. We note that exactly the opposite argument can be made when $p \geq \frac{1}{2}$ is the initial acceptance probability of all areas, i.e., any spreading strategy guarantees that every area accepts the idea with probability of at least $p$. 
\np*
\begin{proof}
  We prove this result for the case when $p \leq \frac{1}{2}$. The other case ($p \leq \frac{1}{2}$) follows from symmetry.
  To avoid confusion, we let $p_0 = p$ and use $p_0$ instead of the real number $p$ throughout this proof.
  If we prove that any given area accepts the idea with probability of at most $p_0$, then from linearity of expectation, we are done. Consider an area $v$ scheduled at time $t+1$. The probability that the area accepts or rejects the idea is given by
  \begin{align*}
    Pr(X_v = 1) =& p_0(1 - Pr(I_t \geq c_v) - Pr(I_t \leq -c_v)) + Pr(I_t \geq c_v),\\
    Pr(X_v = -1) =& (1-p_0)(1 - Pr(I_t \geq c_v) - Pr(I_t \leq -c_v)) + Pr(I_t \leq -c_v).
  \end{align*}
  Since $Pr(X_v=1)+Pr(X_v=-1) = 1$, if we prove that $\frac{Pr(X_v = 1)}{Pr(X_v = -1)} \leq \frac{p_0}{1-p_0}$, then we have $Pr(X_v = 1) \leq p_0$.
  We have
  \begin{align*}
    \frac{Pr(X_v = 1) }{Pr(X_v = -1)}=\frac{ p_0(1 - Pr(I_t \geq c_v) - Pr(I_t \leq -c_v)) + Pr(I_t \geq c_v)}
    {(1-p_0)(1 - Pr(I_t \geq c_v) - Pr(I_t \leq -c_v)) + Pr(I_t \leq -c_v)}.
  \end{align*}
  We have:
  \begin{align*}
    \frac{p_0(1 - Pr(I_t \geq c_v) - Pr(I_t \leq -c_v))}{(1-p_0)(1 - Pr(I_t \geq c_v) - Pr(I_t \leq -c_v))} = \frac{p_0}{1-p_0}.
  \end{align*}  
  We know that for any $a, b, c, d, e  \in \mathbb{R}_{>0}$, if $\frac{a}{b} \leq e$ 
  and $\frac{c}{d} \leq e$ then 
  \begin{align}
    \frac{a+c}{b+d} \leq e.\label{eq:17}
  \end{align}
  Therefore, if we prove that $\frac{Pr(I_t \geq c_v)}{Pr(I_t \leq -c_v)} \leq \frac{p_0}{1-p_0}$,
  we are done. Thus, we can prove this theorem by proving that $\frac{Pr(I_k \geq x)}{Pr(I_k \leq -x)} \leq \frac{p_0}{1-p_0}$ for all $x \in \{1 \ldots k\}, k \in \{1\ldots n\}$. We prove this by induction on number of areas.
If there is just one area, then that area decides to accept with probability $p_0$ (as all initial acceptance probabilities are equal to $p_0$).
Assume if the number of areas is less than or equal to $n$, then $\frac{Pr(I_k \geq x)}{Pr(I_k \leq -x)} \leq \frac{p_0}{1-p_0}$ for all $x \in \{1 \ldots k\}, k \in \{1\ldots n\}$. We prove the statement when there are $n+1$ areas.
Let $par(n,x):\mathbb{N}\times\mathbb{N}\mapsto\{0,1\}$ be a function which is $0$ if $n$ and $x$ have the same parity,
    $1$ otherwise.
    Let $v$ be the area scheduled at time $n+1$. Let $\nu = par(n, x)$.
    We now consider the following three cases.\\
    \textbf{Case 1:} $1 \leq x \leq n-2$. The event $I_{n+1} \geq x+1$ is the union of the following two disjoint events:
    \begin{enumerate}
    \item $I_{n} \geq x+2$, and whatever the $n^{th}$ area decides, $I_{n+1}$ is at least $x+1$.
    \item $I_{n} = x+\nu$ and $n+1^{th}$ area decides to accept.
    \end{enumerate}
    Similarly, the event $I_{n+1} \leq -x-1$ is the union of the event $I_n \leq -x-2$ and the event --- $I_n = -x - \nu$ and the $n+1^{th}$ area rejects the idea.
    We note that we require the $par$ function because only one of the events $I_{n}=x$ and $I_n = x+1$ can occur w.p.p. depending on parities of $n$ and $x$. 
    Thus
    \begin{align*}
      Pr(I_{n+1} \geq x+1) =& Pr(I_n \geq x+2) + Pr(X_v=1|I_n = x + \nu)Pr(I_n=x+\nu),\\
      Pr(I_{n+1} \leq -x-1) =& Pr(I_n \leq -x-2) + Pr(X_v=-1|I_n = -x -\nu)Pr(I_n=-x-\nu).
    \end{align*}
    Now, if $x + \nu \geq c_v$, then $Pr(X_v=1|I_n = x + \nu) = Pr(X_v=-1|I_n = -x -\nu) = 1$, otherwise $Pr(X_v=1|I_n = x + \nu) = p_0 < 1-p_0 = Pr(X_v=-1|I_n = -x -\nu)$.
    Therefore, $Pr(X_v=1|I_n = x + \nu) \leq Pr(X_v=-1|I_n = -x -\nu)$. Let $\beta = Pr(X_v=-1|I_n = -x -\nu)$.
    Using the above, we have
    \begin{align*}
      Pr(I_{n+1} \geq x+1) \leq& Pr(I_n \geq x+2) + \beta Pr(I_n=x+\nu),\\
      Pr(I_{n+1} \leq -x-1) =& Pr(I_n \leq -x-2) + \beta Pr(I_n=-x-\nu).
    \end{align*}
    From above, we have
    \begin{align}
      f(\beta) =& \frac{Pr(I_n \geq x+2) + \beta Pr(I_n=x+\nu)}{Pr(I_n \leq -x-2) + \beta Pr(I_n=-x-\nu)}
      \geq \frac{Pr(I_{n+1} \geq x+1)}{Pr(I_{n+1} \geq -x-1)}. \label{eq:12}
    \end{align}
    The function $f(\beta)$ is either increasing or decreasing and hence has extrema at end points of its range.
    The maxima is {$\leq \max\{\frac{Pr(I_n \geq x+2)}{Pr(I_n \leq -x-2)}, \frac{Pr(I_n \geq x+2) + Pr(I_n=x+\nu)}{Pr(I_n \leq -x-2) + Pr(I_n=-x-\nu)}\}$} because $\beta \in [0,1]$. 
    Now $Pr(I_n \geq x+2) + Pr(I_n=x+1) + Pr(I_n=x) = Pr(I_n \geq x)$ and $Pr(I_n \leq -x-2) + Pr(I_n=-x-\nu) = Pr(I_n \leq -x)$.
    Thus $f \leq \max\{\frac{Pr(I_n \geq x+2)}{Pr(I_n \leq -x-2)}, \frac{Pr(I_n \geq x)}{Pr(I_n \leq -x)}\} \leq \frac{p_0}{1-p_0}$ (from induction hypothesis).
    From above and (\ref{eq:12}), $\frac{Pr(I_{n+1} \geq x+1)}{Pr(I_{n+1} \leq -x-1)} \leq \frac{p_0}{1-p_0}$.\\
    \textbf{Case 2:} $x=0$. If $n$ is odd then $Pr(I_{n+1} \geq 1) = Pr(I_{n+1} \geq 2)$ and $Pr(I_{n+1} \leq -1) = Pr(I_{n+1} \leq -2)$ and this case is the same as $x=1$ and hence considered above. Thus, assume that $n$ is even. Thus
    \begin{align}
      Pr(I_{n+1} \geq 1) &= Pr(I_n \geq 2) + Pr(X_v=1|I_n = 0)Pr(I_n=0), \label{eq:13}\\
      Pr(I_{n+1} \leq -1) &= Pr(I_n \leq -2) + Pr(X_v=-1|I_n = 0)Pr(I_n=0). \label{eq:14}
    \end{align}
    Since, if $I_n = 0$, then areas decide based on the initial acceptance probability.
    We have $Pr(X_v=1|I_n = 0) = p_0$ and $Pr(X_v=-1|I_n = 0) = 1-p_0$. Using this fact
    ,by dividing (\ref{eq:13}) and (\ref{eq:14}), we have
    \begin{align*}
      \frac{Pr(I_{n+1} \geq 1) }{Pr(I_{n+1} \leq -1) } \leq \frac{Pr(I_n \geq 2) + p_0Pr(I_n=0)}{Pr(I_n \leq -2) + (1-p_0)Pr(I_n=0)}.
    \end{align*}
    From induction hypothesis, $\frac{Pr(I_{n} \geq 2)}{Pr(I_{n} \leq -2)} \leq \frac{p_0}{1-p_0}$. Thus, we conclude $\frac{Pr(I_{n+1} \geq 1) }{Pr(I_{n+1} \leq -1) } \leq \frac{p_0}{1-p_0}$ based on (\ref{eq:17}).\\
    \textbf{Case 3:} $x \in \{n-1, n\}$.
    In this case $Pr(I_n \geq x+2) =0$, since the number of adopters can never be more than the number of total areas. Also, $I_{n+1} $ cannot be equal to $n$ because $n$ and $n+1$ don't have the same parity. Therefore, $Pr(I_{n+1} \geq n) = Pr(I_{n+1} \geq n+1)$ and $Pr(I_{n+1} \leq -n) = Pr(I_{n+1} \leq -n-1)$. Thus, it is enough to analyze the case $x=n$. We have
    \begin{align*}
      Pr(I_{n+1} \geq n+1) &= Pr(X_v=1|I_n = n)Pr(I_n=n), \\
      Pr(I_{n+1} \leq n+1) &= Pr(X_v=-1|I_n = -n)Pr(I_n=-n). 
    \end{align*}
    Since either both decisions are made based on thresholds with probability $1$ or both are made based on initial probabilities and initial acceptance probability is less than the initial rejection probability, We know that $Pr(X_v=1|I_n = n) \leq Pr(X_v=-1|I_n = -n)$.
    Therefore $\frac{Pr(I_{n+1} \geq n+1)}{Pr(I_{n+1} \leq n+1) } \leq \frac{Pr(I_n=n)}{Pr(I_n = -n)}$.
    Now, since $Pr(I_n = n) = Pr(I_n \geq n)$ and $Pr(I_n = -n) = Pr(I_n \leq -n)$, from induction hypothesis, we have $\frac{Pr(I_{n+1} \geq n+1)}{Pr(I_{n+1} \leq n+1) } \leq \frac{p_0}{1-p_0}$ and we are done.
\end{proof}


\section{Non-adaptive Marketing Strategy with Random Thresholds}
\label{sec:random-th}
We consider the problem of designing a non-adaptive spreading strategy when the thresholds are drawn independently from the same but unknown distribution. We show the best spreading strategy is to schedule areas in a non-increasing order of initial acceptance probabilities. We prove the optimality of the algorithm using a coupling argument. 
%
First we state the following lemma which will be useful in proving Theorem \ref{thm-iidc}. The proof is in Appendix \ref{sec:miss:lem1}.
\begin{lem}
  \label{lem-triv}
  Let $\pi$ and $\pi'$ be two spreading strategies. If $\exists k \in \mathbb{Z}_>0$, such that $\pi(i) = \pi'(i), \fa i \geq k$ and $Pr(I_{k} \geq x;\pi) \geq Pr(I_{k} \geq x;\pi'), \fa x \in \mathbb{Z}$, then $E(I_n;\pi) \geq E(I_n;\pi')$.
\end{lem}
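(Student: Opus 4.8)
The plan is to reduce the claim to a one-dimensional monotonicity statement, exploiting the fact that in the full propagation setting the aggregate sequence $(I_1,\ldots,I_n)$ is a Markov chain. When area $v=\pi(t)$ is scheduled, its decision depends only on $D(v)=I_{t-1}$ (the signed count of all decisions made so far) together with $v$'s private coin, so $I_t = I_{t-1}+X_v$ and the transition at step $t$ is determined solely by $I_{t-1}$ and by the pair $(c_v,p_v)$. Since $\pi$ and $\pi'$ agree on positions $k,k+1,\ldots,n$, the transitions governing the tail of the process (steps $k+1$ through $n$) coincide for the two strategies, and the coins used there belong to the areas $\pi(k+1),\ldots,\pi(n)$, which are independent of everything that happened in the first $k$ steps. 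Therefore, writing $\phi(a) := E(I_n \mid I_k = a)$ for the expected final aggregate given that the aggregate after $k$ steps equals $a$, the same function $\phi$ serves both processes, and $E(I_n;\pi)=E(\phi(I_k);\pi)$, $E(I_n;\pi')=E(\phi(I_k);\pi')$ by the tower property.

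The crux is to show that $\phi$ is non-decreasing on the integers, which I would do by a monotone coupling of the tail. Realize the decision of a tail area $v$ as $X_v=g_v(m,B_v)$, where $m$ is the current aggregate, $B_v$ is a private coin that says ``accept'' with probability $p_v$, and $g_v(m,B_v)$ equals $1$ if $m\ge c_v$, equals $-1$ if $m\le -c_v$, and otherwise follows $B_v$. For each fixed coin value, $m\mapsto g_v(m,B_v)$ is a non-decreasing step function of $m$ --- across the regions $m\le -c_v$, $-c_v<m<c_v$, $m\ge c_v$ it reads either $(-1,-1,1)$ or $(-1,1,1)$ --- and because $c_v$ is a positive integer and $I_t$ is integer-valued, the jump from $-1$ to $+1$ only straddles integer points, so $m\mapsto m+g_v(m,B_v)$ is strictly increasing on the integers for every fixed coin value. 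Running both tails with a common family of coins $\{B_v\}$ and starting them from $a\ge a'$ then preserves the order of the aggregates at every tail step by induction, hence $I_n\ge I_n'$ pointwise; taking expectations gives $\phi(a)\ge\phi(a')$.

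To finish, I would combine this with the hypothesis $Pr(I_k\ge x;\pi)\ge Pr(I_k\ge x;\pi')$ for all $x$, i.e.\ that $I_k$ under $\pi$ stochastically dominates $I_k$ under $\pi'$. Since $\phi$ is non-decreasing and bounded (as $|I_k|,|I_n|\le n$), the summation-by-parts identity $E(\phi(Y)) = \phi(-n) + \sum_{x=-n+1}^{n}(\phi(x)-\phi(x-1))\,Pr(Y\ge x)$, applied to both $Y=I_k$ under $\pi$ and $Y=I_k$ under $\pi'$ with every coefficient $\phi(x)-\phi(x-1)\ge 0$, yields $E(\phi(I_k);\pi)\ge E(\phi(I_k);\pi')$, and therefore $E(I_n;\pi)\ge E(I_n;\pi')$.

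I expect the main obstacle to be the one-step monotonicity $m\mapsto m+X_v$: the decision $X_v$ itself jumps by $2$ as $m$ crosses the indecision band $(-c_v,c_v)$, so one must verify that this jump is never large enough to reverse the order --- which is exactly where the integrality of $c_v$ and of $I_t$ is used --- and then carry the resulting coordinatewise domination through the coupled tails. A secondary subtlety is that the lemma only controls the marginal laws of $I_k$, not their joint law, so the argument has to go through the shared conditional-expectation function $\phi$ (equivalently, through a fresh monotone coupling of the two copies of $I_k$, taken independent of the common tail coins), rather than by coupling the two processes from time $1$.
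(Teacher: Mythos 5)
Your proof is correct, and it reaches the conclusion by a genuinely different (dual) route from the paper's. The paper pushes the first-order stochastic dominance of $I_k$ \emph{forward} through the chain: it writes $Pr(X_v=1)=Pr(I_k\ge c_v)(1-p_v)+Pr(I_k\ge -c_v+1)p_v$, a nonnegative combination of values of the survival function of $I_k$, deduces $Pr(I_{k+1}\ge x;\pi)\ge Pr(I_{k+1}\ge x;\pi')$ for all $x$, and iterates this to time $n$, at which point dominance of $I_n$ gives the inequality of expectations. You instead pull the expectation \emph{backward}: you define the value function $\phi(a)=E(I_n\mid I_k=a)$, which is shared by the two processes because the tails coincide and the tail coins are independent of the head, prove $\phi$ is non-decreasing by a monotone coupling of the tail (the key observation that $m\mapsto m+g_v(m,B_v)$ is non-decreasing for each fixed coin, since the $-1\to+1$ jump of the decision is offset by the integer increase in $m$), and then apply the dominance hypothesis once via Abel summation. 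These are the two standard equivalent characterizations of a stochastic-dominance-preserving Markov kernel; your version has the advantage of making the one-step monotonicity fully explicit through the coupling, whereas the paper's final step (from $Pr(X_v=1;\pi)\ge Pr(X_v=1;\pi')$ to dominance of $I_{k+1}$) is stated rather tersely and implicitly relies on $Pr(I_{k+1}\ge x)$ itself being a monotone functional of the survival function of $I_k$. The paper's version is shorter and avoids constructing a coupling. Your two flagged subtleties (integrality preventing order reversal at the threshold, and the need to work through $\phi$ rather than jointly coupling the two heads) are exactly the right ones, and both are handled correctly.
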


\iidc*
\begin{proof}
Let $\pi'$ be a spreading strategy where areas are scheduled in an order that is not non-increasing. Thus, there exists $k$ such that $p_{\pi'(k)} < p_{\pi'(k+1)}$. We prove that if a new spreading strategy $\pi$ is created by exchanging position of areas $\pi'(k)$ and $\pi'(k+1)$, then the expected number of people who accept the idea cannot decrease. It means the best spreading strategy is non-increasing in the initial acceptance probabilites.  

To prove the theorem, we will prove that $Pr(I_{k+1} \geq x; \pi) \geq Pr(I_{k+1} \geq x; \pi')$ and the result then follows from Lemma \ref{lem-triv}.
  Since, the two spreading strategies are identical till time $k-1$ and therefore the random variable $I_{k-1}$ has identical distribution under both the strategies, 
  we can prove the above by proving that 
  $Pr(I_{k+1} \geq I_{k-1} + y|I_{k-1};\pi) \geq Pr(I_{k+1} \geq I_{k-1} + y|I_{k-1};\pi')$ for all $y \in \mathbb{Z}$.
  We note that
  the only feasible values for $y$ are in $\{-2, 0, 2\}$. Hence, if $y > 2$ then both sides of the above inequality are equal to $1$ and the inequality holds. Similarly, if $y <= -2$ both
  sides of the inequality are equal to $1$ and the inequality holds. Thus, we only need to analyze the values $y=0$ and $y=2$.
  
Now we define some notation to help with rest of the proof.
Let $u = \pi'(k+1)$, $v=\pi'(k)$, and $q_i = 1-p_i$. It means $p_v < p_u$. 
  Let $\chi(i,j)$ be the event where $i$ and $j$ are indicators of decision of areas scheduled at time $k$ and $k+1$ respectively, e.g., $\chi(1,1)$ means that areas scheduled at time $k$ and $k+1$ accepted the idea, whereas $\chi(1,-1)$ implies that area scheduled at time $k$ accepted the idea, while the area scheduled at time $k+1$ rejected the idea.
  Let $B(y)$ be the event $I_{k+1} \geq I_{k-1}+y | I_{k-1} = z$ for some arbitrary $z \in \mathbb{Z}$. We consider the cases $I_{k-1} > 0$, $I_{k-1}<0$ and $I_{k-1}=0$ separately.\\
\textbf{Case 1:} $I_{k-1} = z, z>0$. We have, $B(0) = \chi(1,1) \cup \chi(1,-1) \cup \chi(-1,1)$ which is equal to the complement of  ${\chi(-1,-1)}$.
  Since we assume $z>0$, the thresholds $-c_u$ and $-c_v$ cannot be hit. Thus, $\chi(-1,-1)$ occurs only when both areas decide to reject the idea based on their respective initial acceptance probabilities. Thus, from chain rule of probability, it is the product of following four terms:
  \begin{enumerate}
  \item $Pr(z < c_u)$, i.e, the threshold rule does not apply and $u$ decides based on initial acceptance probabilities. 
  \item $u$ rejects the idea based on initial probability of rejection, $q_u$.
  \item $Pr(z - 1 < c_v)$. Given $u$ rejected the idea, $D(v)$, the decision variable for $v$ becomes $z-1$ and the threshold rule does not apply and $v$ decides based on initial acceptance probabilities.
  \item $v$ rejects the idea based on initial probability of rejection, $q_v$.
  \end{enumerate} 
  Therefore, $Pr(\chi(-1,-1)) = Pr(z < c_u)q_uPr(z - 1 < c_v)q_v$.
  Thus, $Pr(B(0);\pi) = 1-Pr(z < c_u)q_uPr(z - 1 < c_v)q_v$.
  Since, $c_u$ and $c_v$ are i.i.d random variables, we can write any probability of form $Pr(z \gtreqless c_u)$ or $Pr(z \gtreqless c_v)$ as $Pr(z \gtreqless x)$, where $x$ is an independent random variable with the same distribution as $c_u$ and $c_v$.
  Thus
  \begin{align}
    Pr(B(0);\pi) = 1-Pr(z < x)q_uPr(z - 1 < x)q_v.
  \end{align}
  Now, $Pr(\chi(1,1)) = Pr(X_u = 1|I_{k-1}=z)Pr(X_v = 1|I_k=z+1)$. Event $X_u=1$ is the union of following two non-overlapping events:
  \begin{enumerate}
  \item $z \geq c_u$; $u$ accepts the idea because of the threshold rule.
  \item $z < c_u$ and $u$ accepts the idea based on initial acceptance probability, $p_u$.
  \end{enumerate}
  Thus, $Pr(X_u = 1|I_{k-1}=z) = Pr(z \geq c_u) + Pr(z < c_u)p_u$. Similarly, $Pr(X_v = 1|I_{k}=z+1) = Pr(z+1 \geq c_v) + Pr(z+1 < c_v)p_v$. Therefore
  \begin{align}
    Pr(B(2);\pi) =&
(Pr(z \geq x) + Pr(z < x)p_u) \nonumber \\ &\times (Pr(z + 1 \geq x) + Pr(z+1 < x)p_v). \label{eq-all-yes1}
  \end{align}
  where we have replaced $c_u$ and $c_v$ by $x$ because they are i.i.d. random variables.
  We can obtain corresponding probabilities for process $\pi'$ by exchanging $p_u$ and $p_v$. Thus, $Pr(B(0);\pi) = Pr(B(0);\pi') = 1-Pr(z < x)q_uPr(z - 1 < x)q_v$. We can write $Pr(B(2);\pi')$ as follows.
  \begin{align}
  Pr(B(2);\pi') = &
(Pr(z \geq x) + Pr(z < x)p_v) \nonumber \\ &\times (Pr(z + 1 \geq x)  + Pr(z+1 < x)p_u). \label{eq-all-yes2}
\end{align}
On the other hand $Pr(z<x) \geq Pr(z+1<x)$ and $Pr(z+1 \geq x) \geq Pr(z\geq x)$. 
Comparing (\ref{eq-all-yes1}) and (\ref{eq-all-yes2}) 
along with these facts that $p_v < p_u$ and $Pr(z < x)Pr(z+1 \geq x) \geq Pr(z \geq x)Pr(z+1 < x)$, we get $ Pr(B(2);\pi) \geq Pr(B(2);\pi')$.\\
\textbf{Case 2:} $I_{k-1}=-z, z > 0$. By a similar analysis, we have
\begin{align}
  Pr(B(2);\pi) =& Pr(z < x)Pr(z-1 < x)p_up_v = Pr(B(2);\pi'), \\
  Pr(B(0);\pi) =& 1-(Pr(z \geq x) + Pr(z < x)q_u),  \nonumber\\
& \times (Pr(z + 1 \geq x) + Pr(z+1 < x)q_v),\label{eq-yn1}\\
  Pr(B(0);\pi') =& 1-(Pr(z \geq x) + Pr(z < x)q_v), \nonumber \\
& \times (Pr(z + 1 \geq x) + Pr(z+1 < x)q_u).\label{eq-yn2}
\end{align}
Comparing (\ref{eq-yn1}) and (\ref{eq-yn2}), we have $Pr(B(0);\pi) \geq Pr(B(0);\pi')$.\\
\textbf{Case 3:} $I_{k-1}=0$. We have
\begin{align}
    Pr(B(2);\pi) =& p_u(Pr(x > 1)p_v + Pr(x=1)), \label{eq-01}\\
    Pr(B(0);\pi) =& p_u + q_uPr(x > 1)p_v,  \label{eq-02} \\
    Pr(B(2);\pi') =& p_v(Pr(x > 1)p_u + Pr(x=1)), \label{eq-03}\\
    Pr(B(0);\pi') =& p_v + q_vPr(x > 1)p_u.  \label{eq-04}
  \end{align}
  By comparing (\ref{eq-01}) with (\ref{eq-03}) and (\ref{eq-02}) with (\ref{eq-04}), we see that $Pr(B(2);\pi) \geq Pr(B(2);\pi')$ and $Pr(B(0);\pi) \geq Pr(B(0);\pi')$ respectively. Thus, $Pr(I_{k+1} \geq I_{k-1} + x| I_{k-1}; \pi) \geq Pr(I_{k+1} \geq I_{k-1} + x| I_{k-1}; \pi'), \forall x \in \mathbb{Z}$.
\end{proof}

\section*{Acknowledgments}
\label{sec:Acknowledgments}
Authors would like to thank Jon Kleinberg for his useful comments about the motivation of our problem.


\begin{thebibliography}{10}

\bibitem{KKT03}
Kempe, D., Kleinberg, J., \'{E}va Tardos:
\newblock Maximizing the spread of influence through a social network.
\newblock In: KDD. (2003)  137--146

\bibitem{KKT05}
Kempe, D., Kleinberg, J., \'{E}va Tardos:
\newblock Influential nodes in a diffusion model for social networks.
\newblock In: ICALP. (2005)  1127--1138

\bibitem{MR07}
Mossel, E., Roch, S.:
\newblock On the submodularity of influence in social networks.
\newblock In: STOC. (2007)  128--134

\bibitem{AEGHIMM10}
AhmadiPourAnari, N., Ehsani, S., Ghodsi, M., Haghpanah, N., Immorlica, N.,
  Mahini, H., Mirrokni, V.S.:
\newblock Equilibrium pricing with positive externalities.
\newblock In: WINE. (2010)  424--431

\bibitem{AGHMMN10}
Akhlaghpour, H., Ghodsi, M., Haghpanah, N., Mahini, H., Mirrokni, V.S., Nikzad,
  A.:
\newblock Optimal iterative pricing over social networks.
\newblock In: WINE. (2010)  415--423

\bibitem{HMS08}
Hartline, J., Mirrokni, V.S., Sundararajan, M.:
\newblock Optimal marketing strategies over social networks.
\newblock In: WWW. (2008)  189--198

\bibitem{Arthur89}
Arthur, W.B.:
\newblock Competing technologies, increasing returns, and lock-in by historical
  events.
\newblock The Economic Journal \textbf{99}(394) (1989)  pp. 116--131

\bibitem{CKP12}
Chierichetti, F., Kleinberg, J., Panconesi, A.:
\newblock How to schedule a cascade in an arbitrary graph.
\newblock In: EC. (2012)  355--368

\bibitem{B92}
Banerjee, A.V.:
\newblock A simple model of herd behavior.
\newblock The Quarterly Journal of Economics \textbf{107}(3) (1992)  797--817

\bibitem{BHW92}
Bikhchandani, S., Hirshleifer, D., Welch, I.:
\newblock A theory of fads, fashion, custom, and cultural change in
  informational cascades.
\newblock Journal of Political Economy \textbf{100}(5) (1992)  992--1026

\bibitem{G78}
Granovetter, M.:
\newblock Threshold models of collective behavior.
\newblock American Journal of Sociology \textbf{83}(6) (1978)  1420--1443

\bibitem{DR}
Domingos, P., Richardson, M.:
\newblock Mining the network value of customers.
\newblock In: KDD. (2001)  57--66

\bibitem{chen2009efficient}
Chen, W., Wang, Y., Yang, S.:
\newblock Efficient influence maximization in social networks.
\newblock In: KDD. (2009)  199--208

\bibitem{AMSX09}
Arthur, D., Motwani, R., Sharma, A., Xu, Y.:
\newblock Pricing strategies for viral marketing on social networks.
\newblock In: WINE. (2009)  101--112

\bibitem{GK12}
Goyal, S., Kearns, M.:
\newblock Competitive contagion in networks.
\newblock In: STOC. (2012)  759--774

\bibitem{new11}
Chen, W., Collins, A., Cummings, R., Ke, T., Liu, Z., Rincon, D., Sun, X.,
  Wang, Y., Wei, W., Yuan, Y.:
\newblock Influence maximization in social networks when negative opinions may
  emerge and propagate.
\newblock In: ICDM. (2011)  379--390

\bibitem{new13}
Li, Y., Chen, W., Wang, Y., Zhang, Z.L.:
\newblock Influence diffusion dynamics and influence maximization in social
  networks with friend and foe relationships.
\newblock In: WSDM. (2013)  657--666

\bibitem{P86}
Provan, J.:
\newblock The complexity of reliability computations in planar and acyclic
  graphs.
\newblock SIAM Journal on Computing \textbf{15}(3) (1986)  694--702

\end{thebibliography}

\appendix
\section{Examples}
\label{sec:example}
\begin{example}
\label{example:1}
Consider a society with $3$ areas and $3$ types. The planner prior is as follows. Initial acceptance probabilities of areas $1$, $2$, and $3$ are $0.2$, $0.5$, and $0.8$ respectively. Thresholds of areas $1$, $2$, and $3$ are $1$, $2$, and $3$ respectively  (See Figure \ref{fig:example1}). Consider spreading strategy $\pi=(1, 2, 3)$. People in area $1$ accept the idea with probability $p_1=0.2$. Threshold of area $2$ is $2$. It means people in area $2$ decide based on initial rule and accept the idea with probability $p_2=0.5$.  Threshold of area $3$ is $3$. Thus, people in area $3$ decide based on initial rule as well and accept the idea with probability $p_3=0.8$. Therefore, the expected number of adopters for spreading strategy $\pi$ is $p_1+p_2+p_3=1.5$. In order to see the impact of an optimal spreading strategy consider spreading strategy $\pi'=(3,1,2)$. 
People in area $3$ accept the idea with probability $p_3=0.8$. Threshold of area $1$ is $1$. It means the decision of people in area $1$ is correlated to the decision of people in area $3$. In other word, people in area $1$ follow the decision of people in area $3$. Thus, there are two possible scenarios. First, both areas $3$ and $1$ accept the idea. The probability of this scenario is $p_3=0.8$. The second scenario is that both areas $3$ and $1$ reject the idea. The probability of the second scenario is $1-p_3=0.2$. In both scenario the threshold of area $2$ is hit. Hence, area $2$ will accept the idea with probability $p_3=0.8$. Therefore, the expected number of adopters for spreading schedule $\pi'$ is $3p_3=2.4$.  
%
\begin{figure}
\centering
\tikzstyle{H-node}=[rectangle,draw=black,fill=white!30,inner sep=1.3mm]
\tikzstyle{B-node}=[circle,draw=blue,fill=blue!30,inner sep=2.4mm]
\tikzstyle{G-node}=[circle,draw=green,fill=green!30,inner sep=1.3mm]
\tikzstyle{R-node}=[circle,draw=red,fill=red!30,inner sep=2.5mm]
\tikzstyle{W-node}=[rectangle,draw=white,fill=white!30,inner sep=2.5mm]
\tikzstyle{test-node}=[circle,draw=black,fill=black,inner sep=.2mm]

\tikzstyle{bl0} = [draw=black, thick, dashed]   
\tikzstyle{b9} = [draw=black, thick]   
\tikzstyle{bl1} = [->, draw=black]   
\tikzstyle{bl2} = [draw=black!70,thick]   
\tikzstyle{bl3} = [draw=black,thick, dotted]   

\tikzstyle{br0} = [draw=brown, dashed]   
\tikzstyle{br1} = [->, draw=brown]   
\tikzstyle{br2} = [->, draw=brown,thick]   

\tikzstyle{red0} = [draw=red, thick, dashed]   
\tikzstyle{red1} = [draw=red]   
\tikzstyle{red2} = [draw=red,thick]   

\tikzstyle{gr0} = [draw=green, thick, dashed]   
\tikzstyle{gr1} = [draw=green]   
\tikzstyle{gr2} = [draw=green,thick]   
\tikzstyle{gr4} = [draw=green,semithick,rounded corners]   

\begin{tikzpicture}[scale=0.4][domain=0:8]
\draw (8,7.5) node[W-node,label=center:{$p_2 = 0.5$}] (p_2) {};
\draw (8,6.5) node[W-node,label=center:{$c_2 = 2$}] (c_2) {};
\draw[bl2] (6, 8) -- (6, 6) ;
\draw[bl2] (6, 6) -- (10, 6) ;
\draw[bl2] (10, 6) -- (10, 8) ;
\draw[bl2] (10, 8) -- (6, 8) ;
\draw (8,5) node[B-node,label=center:$2$] (u_2) {};

\draw (2,7.5) node[W-node,label=center:{$p_1 = 0.2$}] (p_1) {};
\draw (2,6.5) node[W-node,label=center:{$c_1 = 1$}] (c_1) {};
\draw[bl2] (0, 8) -- (0, 6) ;
\draw[bl2] (0, 6) -- (4, 6) ;
\draw[bl2] (4, 6) -- (4, 8) ;
\draw[bl2] (4, 8) -- (0, 8) ;
\draw (2,5) node[B-node,label=center:$1$] (u_1) {};

\draw (8.5,0.5) node[W-node,label=center:{$p_3 = 0.8$}] (p_3) {};
\draw (8.5,-.5) node[W-node,label=center:{$c_3 = 3$}] (c_3) {};
\draw[bl2] (6.5, 1) -- (6.5, -1) ;
\draw[bl2] (6.5, -1) -- (10.5, -1) ;
\draw[bl2] (10.5, -1) -- (10.5, 1) ;
\draw[bl2] (10.5, 1) -- (6.5, 1) ;
\draw (5,0) node[B-node,label=center:$3$] (u_3) {};

\draw[b9] (u_1) -- (u_2) ;
\draw[b9] (u_1) -- (u_3) ;
\draw[b9] (u_2) -- (u_3) ;
\end{tikzpicture}
\caption{A society with $3$ areas. The expected number of adopters for spreading strategy $\pi=(1, 2, 3)$ is $1.5$. The expected number of adopters for spreading strategy $\pi'=(3, 1, 2)$ is $2.4$.}
\label{fig:example1}
\end{figure}
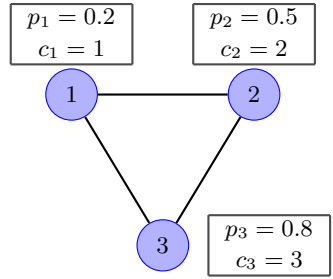
\end{example}

\begin{example}
\label{example:2}
At the first glance, it seems a greedy approach leads us to find the best non-adaptive spreading strategy. The greedy approach is to first schedule a node with the highest probability of adopting. We find a counter-example for this greedy approach with a society with $3$ areas. 

Consider a society with $3$ areas and $3$ types. Area $1$ has threshold $1$ and areas $2$ and $3$ have threshold $2$. Initial acceptance probabilities are $p_1 > p_2 > p_3=0$ (See Figure \ref{fig:example2}). The greedy approach leads us to spreading strategy $\pi=(1,2,3)$. Assume the planner uses spreading strategy $\pi$. The probability that people in area $1$ accept the idea is $p_1$. The threshold for area $2$ is $2$. Hence, they decide based on initial rule. It means the probability that people in area $2$ accept the idea is $p_2$. At last, if both area $1$ and $2$ accept the idea then people in area $3$ accept the idea with probability $p_1p_2$ based on threshold rules . Otherwise, they reject it because $p_3=0$, i.e., area $3$ has an initial preference of $\mathcal{N}$ for sure. Thus, the expected number of adopter is $p_1 + p_2 + p_1p_2$. 
Now, assume the planner uses spreading strategy $\pi'=(2,1,3)$.  Area $2$ accepts the idea with probability $p_2$. The threshold of area $1$ is $1$. It means area $1$ is a follower of area $2$ under spreading strategy $\pi'$. Hence, there are two possibilities. Both areas $1$ and $2$ accept the idea with probability $p_2$ or both areas $1$ and $2$ reject the idea with probability $1-p_2$. In both cases area $3$ decides based on the threshold rule. Therefore, there are $3$ adopters with probability $p_2$ or all areas reject the idea with probability $1-p_2$. Hence, the expected number of adopter is $3p_2$ for spreading strategy $\pi'$. One can check spreading strategy $\pi'$ is better that $\pi$ for various probabilities $p_1$ and $p_2$, e.g., $p_1=0.4$ and $p_2=0.3$ or $p_1=0.8$ and $p_2=0.7$.
\begin{figure}
\centering
\tikzstyle{H-node}=[rectangle,draw=black,fill=white!30,inner sep=1.3mm]
\tikzstyle{B-node}=[circle,draw=blue,fill=blue!30,inner sep=2.4mm]
\tikzstyle{G-node}=[circle,draw=green,fill=green!30,inner sep=1.3mm]
\tikzstyle{R-node}=[circle,draw=red,fill=red!30,inner sep=2.5mm]
\tikzstyle{W-node}=[rectangle,draw=white,fill=white!30,inner sep=2.5mm]
\tikzstyle{test-node}=[circle,draw=black,fill=black,inner sep=.2mm]

\tikzstyle{bl0} = [draw=black, thick, dashed]   
\tikzstyle{b9} = [draw=black, thick]   
\tikzstyle{bl1} = [->, draw=black]   
\tikzstyle{bl2} = [draw=black!70,thick]   
\tikzstyle{bl3} = [draw=black,thick, dotted]   

\tikzstyle{br0} = [draw=brown, dashed]   
\tikzstyle{br1} = [->, draw=brown]   
\tikzstyle{br2} = [->, draw=brown,thick]   

\tikzstyle{red0} = [draw=red, thick, dashed]   
\tikzstyle{red1} = [draw=red]   
\tikzstyle{red2} = [draw=red,thick]   

\tikzstyle{gr0} = [draw=green, thick, dashed]   
\tikzstyle{gr1} = [draw=green]   
\tikzstyle{gr2} = [draw=green,thick]   
\tikzstyle{gr4} = [draw=green,semithick,rounded corners]   

\begin{tikzpicture}[scale=0.4][domain=0:8]
\draw (8,7.5) node[W-node,label=center:$p_2$] (p_2) {};
\draw (8,6.5) node[W-node,label=center:{$c_2 = 2$}] (c_2) {};
\draw[bl2] (6, 8) -- (6, 6) ;
\draw[bl2] (6, 6) -- (10, 6) ;
\draw[bl2] (10, 6) -- (10, 8) ;
\draw[bl2] (10, 8) -- (6, 8) ;
\draw (8,5) node[B-node,label=center:$2$] (u_2) {};

\draw (2,7.5) node[W-node,label=center:$p_1$] (p_1) {};
\draw (2,6.5) node[W-node,label=center:{$c_1 = 1$}] (c_1) {};
\draw[bl2] (0, 8) -- (0, 6) ;
\draw[bl2] (0, 6) -- (4, 6) ;
\draw[bl2] (4, 6) -- (4, 8) ;
\draw[bl2] (4, 8) -- (0, 8) ;
\draw (2,5) node[B-node,label=center:$1$] (u_1) {};

\draw (8.5,0.5) node[W-node,label=center:{$p_3 = 0$}] (p_3) {};
\draw (8.5,-.5) node[W-node,label=center:{$c_3 = 2$}] (c_3) {};
\draw[bl2] (6.5, 1) -- (6.5, -1) ;
\draw[bl2] (6.5, -1) -- (10.5, -1) ;
\draw[bl2] (10.5, -1) -- (10.5, 1) ;
\draw[bl2] (10.5, 1) -- (6.5, 1) ;
\draw (5,0) node[B-node,label=center:$3$] (u_3) {};

\draw[b9] (u_1) -- (u_2) ;
\draw[b9] (u_1) -- (u_3) ;
\draw[b9] (u_2) -- (u_3) ;
\end{tikzpicture}
\caption{A society with $3$ areas. The expected number of adopters for spreading strategy $\pi=(1, 2, 3)$ is $p_1+p_2+p_1p_2$. The expected number of adopters for spreading strategy $\pi'=(2, 1, 3)$ is $3p_2$.}
\label{fig:example2}
\end{figure}
\end{example}

\begin{example}
\label{example:3}
The result of Theorem \ref{sec:bounds-spread-weak} leads us to the following conjecture for the partial propagation setting. 
\begin{quote}
``Consider an arbitrary non-adaptive spreading strategy in the partial propagation setting. If all initial acceptance probabilities are greater/less than $\frac{1}{2}$, then adding an edge to the graph helps/hurts promoting the new product.".
\end{quote}
This conjecture has several consequences, e.g., a complete graph is the best graph for spreading a new idea when initial acceptance probabilities are greater than $\frac{1}{2}$. This eventuates directly Theorem \ref{sec:bounds-spread-weak}.
Surprisingly, this conjecture does not hold. We present an example with the same initial acceptance probabilities of less than $\frac{1}{2}$ such that adding a relationship between two areas increases the expected number of adopters. 

Consider a society with $4$ areas and only one type. Initial acceptance probabilities and thresholds for all areas are $p$ and $1$ respectively. Consider spreading strategy $\pi=(1, 2, 3, 4)$ and a society which is represented by graph $G$ (See Figure \ref{fig:example3}). Areas $1$, $2$, and $3$ decide about the idea independently and accept it with probability $p$. Threshold of area $4$ is $1$. Hence, people in area $4$ accept the idea if there are at least two adopters so far. Therefore, area $4$ accept the idea with probability $3p^2(1-p)+p^3$ and the expected number of adopters is $3p+3p^2(1-p)+p^3$.
Assume influences also propagate between area $1$ and $2$. In this case the society is represented by graph $G'$ (See Figure \ref{fig:example3}).
Threshold of area $2$ is $1$. Hence, area $2$ is a follower of area $1$ under spreading strategy $\pi$. Thus, there are two possibilities when area $2$ is scheduled. Both area $1$ and $2$ accept the idea with probability $p$ or both reject it with probability $1-p$. 
Area $3$ decide independently and accept the idea with probability $p$. Threshold of area $4$ is $1$. Thus, area $4$ is also a follower of both area $1$ and $2$.  Therefore, the expected number of adopter is $4p$ in this case.
One can check $3p+3p^2(1-p)+p^3$ is greater than $4p$ if and only if $0.5<p<1$. It means when $p<0.5$ (resp., $p>0.5$) the number of adopters increases (resp., decreases) by adding a relation to the society. 
\begin{figure}
\centering
\tikzstyle{H-node}=[rectangle,draw=black,fill=white!30,inner sep=1.3mm]
\tikzstyle{B-node}=[circle,draw=blue,fill=blue!30,inner sep=2.4mm]
\tikzstyle{G-node}=[circle,draw=green,fill=green!30,inner sep=1.3mm]
\tikzstyle{R-node}=[circle,draw=red,fill=red!30,inner sep=2.5mm]
\tikzstyle{W-node}=[rectangle,draw=white,fill=white!30,inner sep=2.5mm]
\tikzstyle{test-node}=[circle,draw=black,fill=black,inner sep=.2mm]

\tikzstyle{bl0} = [draw=black, thick, dashed]   
\tikzstyle{b9} = [draw=black, thick]   
\tikzstyle{bl1} = [->, draw=black]   
\tikzstyle{bl2} = [draw=black!70,thick]   
\tikzstyle{bl3} = [draw=black,thick, dotted]   

\tikzstyle{br0} = [draw=brown, dashed]   
\tikzstyle{br1} = [->, draw=brown]   
\tikzstyle{br2} = [->, draw=brown,thick]   

\tikzstyle{red0} = [draw=red, thick, dashed]   
\tikzstyle{red1} = [draw=red]   
\tikzstyle{red2} = [draw=red,thick]   

\tikzstyle{gr0} = [draw=green, thick, dashed]   
\tikzstyle{gr1} = [draw=green]   
\tikzstyle{gr2} = [draw=green,thick]   
\tikzstyle{gr4} = [draw=green,semithick,rounded corners]   

\begin{tikzpicture}[scale=0.4][domain=0:8]
\draw (5,5) node[B-node,label=center:$4$] (u_4) {};
\draw (2,5) node[B-node,label=center:$2$] (u_2) {};
\draw (2,2) node[B-node,label=center:$3$] (u_3) {};
\draw (2,8) node[B-node,label=center:$1$] (u_1) {};

\draw (3.7,2.5) node[W-node,label=center:$G$] (G) {};
\draw[b9] (u_1) -- (u_4) ;
\draw[b9] (u_3) -- (u_4) ;
\draw[b9] (u_2) -- (u_4) ;

\draw (7.5,5) node[W-node,label=center:$\Rightarrow$] (RA) {};

\draw (13,5) node[B-node,label=center:$4$] (v_4) {};
\draw (10,5) node[B-node,label=center:$2$] (v_2) {};
\draw (10,2) node[B-node,label=center:$3$] (v_3) {};
\draw (10,8) node[B-node,label=center:$1$] (v_1) {};

\draw (11.7,2.5) node[W-node,label=center:$G'$] (GG) {};
\draw (9.5,6.5) node[W-node,label=center:$e$] (GG) {};
\draw[b9] (v_1) -- (v_4) ;
\draw[b9] (v_3) -- (v_4) ;
\draw[b9] (v_2) -- (v_4) ;
\draw[b9] (v_1) -- (v_2) ;

\end{tikzpicture}
\caption{This figure represents a partial propagation setting with $4$ areas. All Thresholds are equal to $1$ and all initial acceptance probabilities are $p$. The expected number of adopters for spreading strategy $\pi=(1, 2, 3,4)$ is $3p+3p^2(1-p)+p^3$ for a society which is represented by graph $G$. The expected number of adopters for spreading strategy $\pi=(1, 2, 3,4)$ is $4p$ for a society which is represented by graph $G'$. Note that $3p^2(1-p)+p^3$ is greater than $p$ if and only if $0.5<p<1$}
\label{fig:example3}
\end{figure}
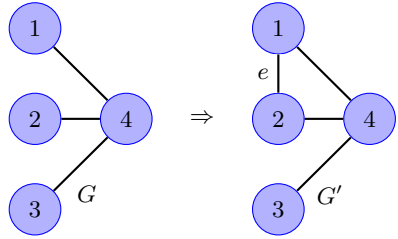
\end{example}


\section{Type Switching Approach}
\label{sec:switch}
Consider a society with a constant number of types. One approach that might work is an algorithm that finds an optimal spreading strategy allowing for only a constant number of switches between types in a spreading strategy. 
We note that areas of the same type are identical from point of view of scheduling a cascade. Thus, any non-adaptive spreading strategy can be specified by specifying types of areas rather than the areas themselves. Let $\tau$ be the mapping between an area and its type. That is $\tau(i)$ is the type of area $i$. Let $\lambda$ be sequence of types for a given spreading strategy. Specifically, $\lambda$ is a vector whose $k^{th}$ component, $\lambda(k) = \tau(\pi(k))$. A switch is any position $k$ in the sequence $\lambda$ such that $\lambda(k) \ne \lambda(k+1)$. As an example, consider a society with four areas with two areas of type $1$ and two areas of type $2$. Then the type sequence $\lambda = (1, 1, 2, 2)$ has a switch at position $2$ whereas $\lambda_2 = (1, 2, 1,2)$ has switches at positions $1$, $2$ and $3$. We define a $\sigma$-switch spreading strategy as a non-adaptive spreading strategy that has at most $\sigma$ switches, where $\sigma$ is a constant independent of input size. We now prove that no algorithm whose output is a $\sigma$-switch spreading strategy can be optimal.

\switch*
\begin{proof}
  The proof outline is as follows. We construct an instance of problem with $2n$ areas with two types, the number of areas of both types being $n$, for which an optimal spreading strategy alternates between these types. Lets call this instance $S$ and lets call this strategy $\pi$. We prove that the expected number of adopters achieved by this optimal strategy is upper bound on number of acceptors for any input instance with areas of these two types, whatever be the number of areas of both types, given that total number of areas is $2n$, e.g., the number of areas of one type can be $n_1$ and the other type $2n-n_1$ for any integer $n_1$ between $0$ and $2n$ and no strategy for this instance can exceed the expected number of adopters achieved by $\pi$ for the instance of problem with $n$ areas of each type. We then show that any $\sigma$-switch strategy for instance $S$ of problem can be improved by changing type of one of the areas. Since, the optimal value achieved by this new strategy cannot be greater than strategy $\pi$ on instance $S$, no $\sigma$-switch strategy can be optimal.

Consider an instance with two types $\gamma_1 = (P,1)$ and $\gamma_2=(P,2)$ where $P > \frac{1}{2}$, the total number of areas is $2n$ and the number of areas of types $\gamma_1$ and $\gamma_2$ is $n$ each. Let $\pi$ be a spreading strategy for which the type sequence of areas is given by $\lambda = (\gamma_1, \gamma_2,\ldots,\gamma_1,\gamma_2)$, i.e., every area at odd position is of type $\gamma_1$ and every area at even position is of type $\gamma_2$. Let the expected number of areas which accept the idea for this spreading strategy be $\alpha$. Now consider an instance where the total number of areas is the same but the number of areas of type $\gamma_1$ is $n_1$ and number of areas of type $\gamma_2$ is $2n-n_1$ for some arbitrary natural number $n_1$ such that $0 \leq n_1 \leq n_2$. For this instance, let the expeted number of areas which accept the idea given an optimal spreading strategy be $\beta$. We now prove that $\alpha \geq \beta$.  If we have no restriction on the number of areas of each type, then for any $t = 0\mod 2$, the areas to be scheduled at time $t+1$ and $t+2$ can be of types $(\gamma_1, \gamma_1)$, $(\gamma_1, \gamma_2)$, $(\gamma_2, \gamma_1)$ or $(\gamma_2, \gamma_2)$. We prove that $\alpha \geq \beta$ by proving that it is better to schedule areas of type $\gamma_1$ and $\gamma_2$ at times $t+1$ and $t+2$ respectively. If $|I_t| \geq 2$, then we are indifferent between all spreading strategies because in this case all the areas will decide based on the threshold rule. Thus, if we can prove that $(\gamma_1, \gamma_2)$ is a best choice for types at times $t+1$ and $t+2$ when $|I_t| < 2$, we are done. Since $t$ is even, the only feasible value of $|I_t| \leq 2$ is $I_t = 0$. Thus, this is the only case we need to analyze. Let $\rho$ be the tuple of types of areas scheduled at times $t+1$ and $t+2$. Let $\chi$ be the tuple indicating decisions of areas scheduled at times $t+1$ and $t+2$. Now we analyze the probabilties with which the four possible values of $\chi$ are realized for each of the four possible values of $\rho$ when $I_t = 0$. Let number of areas to be scheduled after time $t$ be $m$.\\
  \textbf{Case 1:} $\rho = (\gamma_1, \gamma_1)$ or $(\gamma_2, \gamma_1)$\\
  In this case, the first area decides based on its initial acceptance probability and the second area follows the decision of the first area.
  \begin{align*}
    Pr(\chi = (1,1)) &= P\\
    Pr(\chi = (1,-1)) &= 0\\
    Pr(\chi = (-1,1)) &= 0\\
    Pr(\chi = (-1,-1)) &= 1-P
  \end{align*}
  The expected number of areas which accept the idea after time $t$ in this case is $mP$, as all areas follow the decision of area scheduled at time $t+1$.\\
  \textbf{Case 2:} $\rho = (\gamma_1, \gamma_2)$ or $(\gamma_2, \gamma_2)$\\
  In this case, both the areas decide based on their initial acceptance probability.
  \begin{align}
    Pr(\chi = (1,1)) &= P^2\label{eq:9}\\
    Pr(\chi = (1,-1)) &= P(1-P)\label{eq:10}\\
    Pr(\chi = (-1,1)) &= P(1-P)\label{eq:11}\\
    Pr(\chi = (-1,-1)) &= (1-P)^2
  \end{align}
  From (\ref{eq:9}), with probability $P^2$, all areas after time $t$ will accept the idea. If for any time $t'$, we are given that $I_{t'} = 0$, then we can treat the subsequent areas as the starting point of a new spreading strategy. Thus, if $I_{t+2}=0$, then from Theorem \ref{sec:bounds-spread-weak} (given that $P > \frac{1}{2}$), the expected number of adopters for any future spreading strategy is at least $(m-2)P$. Hence, from (\ref{eq:10}) and (\ref{eq:11}), with probability $2P(1-P)$ the expected number of areas that will accept after time $t$ is at least $1 + (m-2)P$. Therefore, in this case, the expected number of areas that accept after time $t$ is at least $mP^2 + 2P(1-P)(1 + (m-2)P)$. Thus, we are done if we prove that $mP^2 + 2P(1-P)(1 + (m-2)P)$ is greater than $mP$.
  \begin{align*}
    mP^2 + 2P(1-P)(1 + (m-2)P) - mP = P(1-P)(-m + 2(1+(m-2)P))
  \end{align*}
  Thus, it is enough to prove that $2(1+(m-2)P) - m > 0$. We have:
  \begin{align*}
    2(1+(m-2)P) - m = (2P-1)(m-2)
  \end{align*}
  Since $P > \frac{1}{2}$, $2P-1 > 0$. Thus, for all $m > 2$, it is strictly better to schedule an area of type $\gamma_2$ at time $t+2$. If an area of type $\gamma_2$ is scheduled at time $t+2$, then it is equivalent to schedule an area of either type at time $t+1$. Thus, given that there is at least one more area to follow at time $t+3$, it is best to schedule areas of type $\gamma_1$ and $\gamma_2$ respectively at times $t+1$ and $t+2$ at any arbitrary time $t = 0 \mod 2$. Also, such a schedule is strictly better, all other things begin same, than the schedule where, areas of type $\gamma_1$ are scheduled at times $t+1$ and $t+2$. This fact is important as we use this later in the proof. If there are no more areas to follow, then we are indifferent to all the four options. Hence, the expected number of adopters achieved by $\pi$ is an upper bound on number of acceptors for any input instance with areas of these two types whatever be the number of areas of both types

The final part of this proof is by contradiction. Let the the number of areas in the input instance of problem be $2n$ with $n$ areas each of types $\gamma_1 = (P,1)$ and $\gamma_2 = (P,2)$. Consider a $\sigma$-switch strategy. Choose $n \geq 4(\sigma+1)$. Thus, every $\sigma$-switch strategy will have at least four consecutive areas of type $\gamma_1$. Let a $\sigma$-switch strategy, $\pi'$, be an optimal one. Therefore, there will exist a time $t$ in $\pi'$ such that $t = 0\mod 2$, $\tau(\pi'(t+1)) = \gamma_1$, $\tau(\pi'(t+2)) = \gamma_1$ and at least one more area will be scheduled after time $t+2$. As explained earlier, the expected number of adopters in this case is strictly less than expected number of adopters if we schedule an area of type $\gamma_2$ at time $t+2$, which, as proved above, is at most the expected number of adopters for a strategy with type sequence $(\gamma_1, \gamma_2, \ldots, \gamma_1, \gamma_2)$. Therefore, strategy $\pi$ is not optimal. This is a contradiction and no $\sigma$-switch strategy can be optimal for the given instance.
\end{proof}

\section{Hardness Result}
\label{sec:hardness}

We prove that problem of computing expected number of adopters for a given spreading strategy in the partial propagation setting is $\#P$-complete. This result applies even when the input graphs are planer with a maximum degree of $3$ and have only $4$ different types of vertices. We prove this by reduction from a version of the network reliability problem that is known to be $\#P$-complete (\cite{P86}). In the network reliability problem, a directed graph $G$ and probability $0 \leq p \leq 1$ are given. Nodes fail independently with probability $1-p$. Therefore, each node is present in the surviving subgraph with probability $p$. We achieve the reduction by simulating the $s-t$ network reliability problem by designing an instance of cascade scheduling problem where, probability of an area $v$ accepting an idea is exactly equal to a path existing in the surviving sub-graph from the source to vertex $v$. Before proceeding to details of the proof, we give some definitions below.

\begin{defn}
  Given a directed graph $G$ with source $s$, terminal $t$, and a probability $1-p, 0 \leq p < 1$ of nodes failing independently,
  the \emph{$(s,t)$-connectedness reliability} of $G$, $R(G, s, t; p)$, is defined as the probability that there is at least one path 
  from $s$ to $t$ such that none of the vertices falling on the path have failed.
\end{defn}
\begin{defn}
  \emph{AST} is the problem of computing $R(G, s, t; p)$ when $G$ is an acyclic directed
  $(s, t)$-planar graph with each vertex having degree at most three. We denote an instance
  of AST on graph $G$ as $AST(G,s,t,p)$.
\end{defn}
\begin{defn}
  Given an influence spread process, $S=(G,\mathbf{c},\mathbf{p},\pi)$ on $G$ with a source node $s$ and a target node $t$, \emph{IST}
  is the problem of computing $Pr(X_t = 1; S)$ given that $\pi(1) = s$ and $Pr(X_s=1)=1$. We denote an instance of IST by
  $IST(G, \mathbf{c}, \mathbf{p}, \pi, s, t)$.
\end{defn}
We will reduce an instance of AST to an instance of IST (Probability of Influence Spread to T).

Given an instance of AST, $AST(G=(V,E),s,t,p)$ we now construct an instance of IST, $IST(G'=(V' ,E'), \mathbf{c}, \mathbf{p}, \pi, s, t)$ for which $R(G,s,t;p) = Pr(X_t=1)$. Let $d^{in}_v$ be the indegree of $v \in V$ in $G$. For every vertex $v \in V - \{s\}$, we add three vertices to graph $G'$. Lets denote them by $b_v$, the blocking vertex of $v$, $f_v$, the forwarding vertex for $v$ and $v'$, which corresponds to the original vertex $v$. The rationale for nomenclature will become apparent later. For every edge $(u,v)$ in $E$, we add an edge $\{u', b_v\}$ in $E'$. In addition, we add edges $\{b_v, v'\}$ and $\{f_v, v'\}$ to $E'$. The acceptance probabilities and thresholds are set as follows: $p_{v'} = 0, p_{f_v} = p, p_{b_v}=1 \fa v \in V - \{s\}, p_{s'} = p$. $c_v = 2, c_{b_v} = d^{in}_v  \fa v \in V - \{s\}$. Threshold $c_{s'}$ is irrelevant and can be any arbitrary value greater than $0$ since it is the first vertex to be scheduled. Thresholds $c_{f_v}$ can also be any arbitrary value greater than $0$ since no neighbor of $f_v$ is scheduled before $f_v$. Let $\pi':V\mapsto V$ be any topological ordering on $V$ where, $s$ is the first node and $t$ is the last node. Then $\pi$ is constructed as follows:
\begin{align*}
  \pi^{-1}(s') =& 1\\
  \pi^{-1}(v') =& 3\pi'^{-1}(v) - 2 \fa v \in V - \{s\}\\
  \pi^{-1}(b_v) =& 3\pi'^{-1}(v) - 4 \fa v \in V - \{s\}\\
  \pi^{-1}(f_v) =& 3\pi'^{-1}(t) - 3 \fa v \in V - \{s\}
\end{align*}
The above construction of $\pi$ can be interpreted as follows. Source remains the first vertex to be scheduled. A vertex $v$ is split into three vertices --- $v'$, $b_v$ and $f_v$. In place of $v$, these three vertices are consecutively scheduled in order $b_v$, $f_v$ and $v'$, e.g., if $\pi' = (s,v,t)$, then $\pi=(s', b_v, f_v, v', b_t, f_t, t')$. 

Let $IS$ be the influence spread process $(G', \mathbf{c}, \mathbf{p}, \pi)$. Now, we prove the following lemmas which relate the probability of existence of a path of operative vertices between $s$ and $v$ in $G$ and the probability that area $v$ accepts the idea in the influence spread process $IS$.

\begin{figure}
\centering
\tikzstyle{H-node}=[rectangle,draw=red,fill=red!30,inner sep=1.3mm]
\tikzstyle{B-node}=[circle,draw=blue,fill=blue!30,inner sep=1.3mm]
\tikzstyle{G-node}=[circle,draw=green,fill=green!30,inner sep=1.3mm]
\tikzstyle{R-node}=[circle,draw=red,fill=red!30,inner sep=1.3mm]
\tikzstyle{test-node}=[circle,draw=black,fill=black,inner sep=.2mm]

\tikzstyle{bl0} = [draw=black, thick, dashed]   
\tikzstyle{bl1} = [->, draw=black]   
\tikzstyle{bl2} = [->, draw=black!70,thick]   
\tikzstyle{bl3} = [draw=black,thick, dotted]   

\tikzstyle{br0} = [draw=brown, dashed]   
\tikzstyle{br1} = [->, draw=brown]   
\tikzstyle{br2} = [->, draw=brown,thick]   

\tikzstyle{red0} = [draw=red, thick, dashed]   
\tikzstyle{red1} = [draw=red]   
\tikzstyle{red2} = [draw=red,thick]   

\tikzstyle{gr0} = [draw=green, thick, dashed]   
\tikzstyle{gr1} = [draw=green]   
\tikzstyle{gr2} = [draw=green,thick]   
\tikzstyle{gr4} = [draw=green,semithick,rounded corners]   

\begin{tikzpicture}[scale=0.5][domain=0:8]
\draw (5,3) node[B-node,label=below:$v$] (v) {};
\draw (2,1) node[B-node,label=below:$u_d$] (u1) {};
\draw (2,3) node[B-node,label=above:$u_2$] (u2) {};
\draw (2,5) node[B-node,label=above:$u_1$] (u3) {};
\draw[bl2] (u1) -- (v) ;
\draw[bl2] (u2) -- (v) ;
\draw[bl2] (u3) -- (v) ;

\draw (2,1.5) node[test-node] {};
\draw (2,2) node[test-node] {};
\draw (2,2.5) node[test-node] {};

\draw[bl2] (0,0.5) -- (u1) ;
\draw[bl2] (0,1.5) -- (u1) ;

\draw[bl2] (0,2.5) -- (u2) ;
\draw[bl2] (0,3.5) -- (u2) ;

\draw[bl2] (0, 4.5) -- (u3) ;
\draw[bl2] (0, 5.5) -- (u3) ;
\draw (16,3) node[B-node,label=below:$v$] (v) {};
\draw (13,1) node[R-node,label=below:$f_v$] (fv) {};
\draw (13,3) node[R-node,label=below:$b_v$] (bv) {};
\draw (11,1) node[B-node,label=below:$u_d$] (u1) {};
\draw (11,3) node[B-node,label=above:$u_2$] (u2) {};
\draw (11,5) node[B-node,label=above:$u_1$] (u3) {};

\draw (9,0.5) node[R-node,label=left:$f_{u_d}$] (fu1) {};
\draw (9,1.5) node[R-node,label=left:$b_{u_d}$] (bu1) {};
\draw (9,2.5) node[R-node,label=left:$f_{u_2}$] (fu2) {};
\draw (9,3.5) node[R-node,label=left:$b_{u_2}$] (bu2) {};
\draw (9,4.5) node[R-node,label=left:$f_{u_1}$] (fu3) {};
\draw (9,5.5) node[R-node,label=left:$b_{u_1}$] (bu3) {};

\draw (11,1.5) node[test-node] {};
\draw (11,2) node[test-node] {};
\draw (11,2.5) node[test-node] {};

\draw[bl3] (bv) -- (v) ;
\draw[bl3] (fv) -- (v) ;

\draw[bl2] (u1) -- (bv) ;
\draw[bl2] (u2) -- (bv) ;
\draw[bl2] (u3) -- (bv) ;

\draw[bl3] (fu1) -- (u1) ;
\draw[bl3] (bu1) -- (u1) ;

\draw[bl3] (fu2) -- (u2) ;
\draw[bl3] (bu2) -- (u2) ;

\draw[bl3] (fu3) -- (u3) ;
\draw[bl3] (bu3) -- (u3) ;

\end{tikzpicture}
\caption{Reduction from Network Reliability on a DAG to Computing Expected Number of Influenced Nodes -- The diagram on left is a part of DAG with probability of failure of each node equal to $(1-p)$. The diagram on right is corresponding part of graph that represents an influence spread stochastic process the models the given network reliability problem where $p_{b_v} =1$, $c_{b_v}=d$, $p_{f_v}=p$,$p_{v'}=0$, and $c_{v'}=2$.}
\label{fig:generated-runtime}
\end{figure}
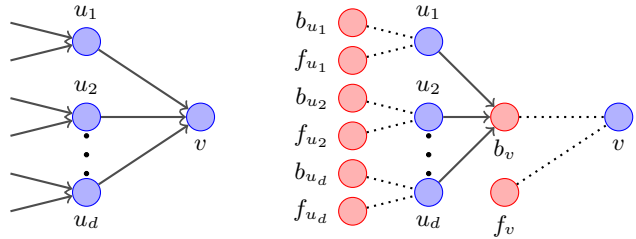
We first prove that computing the expecte number of vertices in graph to which $s$ has a path with operating vertices is $\#P$-complete. We then use this to prove the main theorem.
\begin{lem}
  \label{sec:hardness-result}
  Consider an instance of AST, $AST(G=(V,E), s,t,p)$. Then computing the expected number of vertices in graph to which $s$ has a path with operating vertices is $\#P$-complete.
\end{lem}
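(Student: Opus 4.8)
The plan is to prove $\#P$-completeness by a Turing reduction from $AST$, exploiting the fact that, by linearity of expectation, the expected number of vertices reachable from $s$ through operating vertices is $N(G):=\sum_{v\in V}R(G,s,v;p)$, so that a difference of $N$ over two closely related subgraphs isolates a single term $R(G,s,t;p)$. Membership in $\#P$ is routine: for rational $p=a/b$ (with integers $0\le a\le b$), the quantity $b^{n}N(G)$ with $n=|V|$ is a nonnegative integer obtained by summing, over all failure patterns $S\subseteq V$ and all targets $v$ reachable from $s$ in $V\setminus S$, the weight $a^{\,n-|S|}(b-a)^{|S|}$, which a nondeterministic polynomial-time machine counts --- the standard reliability argument with one extra summation over $v$.

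For hardness, given an $AST$ instance $AST(G,s,t,p)$ I would first pass to $G^{\circ}$, obtained from $G$ by deleting every out-edge of $t$. Since a path from $s$ to $t$ in a DAG has $t$ only as its last vertex, it never uses an out-edge of $t$, so $R(G^{\circ},s,t;p)=R(G,s,t;p)$; moreover $G^{\circ}$ is still acyclic, $(s,t)$-planar and of maximum degree $3$, so computing $R(G^{\circ},s,t;p)$ suffices. In $G^{\circ}$ the terminal $t$ is a sink, hence no $s$-to-$v$ path with $v\neq t$ can pass through $t$, and deleting $t$ leaves the reachability probability of every other vertex unchanged: $R(G^{\circ}\setminus\{t\},s,v;p)=R(G^{\circ},s,v;p)$ for all $v\neq t$. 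Consequently
\begin{align*}
N(G^{\circ})-N(G^{\circ}\setminus\{t\})=R(G^{\circ},s,t;p)=R(G,s,t;p).
\end{align*}
Thus two calls to an oracle for the expected-reachable-count problem --- on $G^{\circ}$ and on $G^{\circ}\setminus\{t\}$, each with source $s$ and node reliability $p$, both planar, acyclic and of degree at most $3$ --- plus one subtraction recover $R(G,s,t;p)$; since $AST$ is $\#P$-complete \cite{P86}, so is the expected-reachable-count problem.

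I expect the only subtle points to be bookkeeping: verifying that deleting $t$'s out-edges and then $t$ itself keeps the instance inside the required graph class while preserving both $R(\cdot,s,t;p)$ and the reachability probabilities of the surviving vertices. The genuinely heavy step of the full hardness statement lies not here but in the subsequent $IST$ construction with its blocking vertices $b_v$ and forwarding vertices $f_v$, which encodes ``$v$ is reachable in $G$'' as ``area $v$ accepts the idea'', and into which this lemma is plugged.
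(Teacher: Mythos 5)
Your proposal is correct and follows essentially the same route as the paper: isolate $R(G,s,t;p)$ as the difference of the expected reachable-vertex counts of the instance and of the instance with $t$ deleted, using that $t$ is a sink so removing it does not affect the reachability probability of any other vertex. The only differences are cosmetic — you add the routine $\#P$-membership argument and explicitly prune $t$'s out-edges, whereas the paper simply notes that $t$ has no outgoing edges and phrases the reduction as a proof by contradiction.
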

\begin{proof}
  Let $a(G,s)$ be the expected number of vertices in the graph to which $s$ has a path with operating vertices in $G$.
  Let $b(G, s,t)$ be probability that there is a path of operating vertices from $s$ to $t$ in $G$. We note that $t$ has no outgoing edges.
  Lets assume that $a(G,s)$ can be computed in time polynomial in $|G|$.
  Let $G' = G - \{t\}$. Deletion of $t$ does not change probability of survival of any path whose destination is not $t$. Therefore $a(G',s) = \sum_{u \in V-\{t\}}b(G,s,u)$. Thus, $a(G,s) - a(G',s) = b(G,s,t)$. This is a contradiction because this implies that $b(G,s,t)$ can be computed in time polynomial in $|G|$.
\end{proof}

The proof of the main theorem of this section is organized as follows. We first prove that 
the probability of an area $v'$ accepting an idea is exactly equal to probability of a path existing from $s$ to $v$. Then, we use this fact along with Lemma \ref{sec:hardness-result} to prove the main result.
\hardness*
\begin{proof}
Let $AST(G=(V,E),s,t,p)$ be an instance of AST problem. Let $S(G'=(V',E'), \mathbf{c}, \mathbf{p}, \pi)$ be an influence spread process with $G', c_v, p_v$ and $\pi$ as defined above. Then an area $v \ne s,t$ accepts the idea with probability $p$ iff at least one of its predecessors in $G$ also accepts the idea.

Let $P(v)$ be the set of predecessors of $v$ in $G$. We note that in $IS$, by construction of $\pi$ and $G'$, vertices in $P(v)$ are exactly the neighbors of $b_v$ that are scheduled before $b_v$. Area $b_v$ is immediately followed by $f_v$ and $f_v$ by $v$. Also, by construction of $G'$, $b_v$ and $f_v$ are neighbors of $v$ and $v$ has no other neighbors. Area $f_v$'s only neighbor is $v$.

If no vertex in $P(v)$ accepts the idea, then $D(b_v) = -d^{in}_v = -c_{b_v}$ and thus, $Pr(b_v=-1| \text{ no vertex in $P(v)$ accepts the idea }) = 1$ and therefore, $b_v$ rejects the idea. Since, threshold of $v$ is $c_v=2$, $v$ decides based on threshold if and only if both its neighbors either accept or reject the idea. Therefore if $b_v$ rejects the idea, then if $f_v$ accepts the idea, then $v$ does not accept the idea because it decides to reject the idea based on its initial acceptance probability as $p_v = 0$.  If $X_{f_v}=-1$, then also $v$ does not accept the idea because it reject the idea based on threshold rule, because both its neighbors rejected this idea. Thus, if none of the vertices in $P(v)$ accept the idea then $v$ does not accept the idea.

If any area in $P(v)$ accepts the idea then $-c_{b_v} = -d^{in}_v < D(b_v) < d^{in}_v = c_{b_v}$ and $b_v$ accepts the idea because its initial acceptance probability, $p_{b_v} = 1$. Now, if $f_v$ accepts the idea then $v$ also accepts because $c_v = 2$ and if $f_v$ rejects the idea, then $v$ does not accept the idea because it decides to reject it on basis of its initial acceptance probability, $p_v = 0$. Since, no neighbor of $f_v$ is scheduled before $f_v$, $f_v$ accepts the idea independently at random with its initial acceptance probability $p_{f_v} = p$. Therefore, given that at least one vertex in set $P(v)$ accepts the idea, $v$ accepts the idea with probability $p$.  

Now, by principal of deferred decisions, process of finding a path of operating vertices from $s$ to $t$ in the network reliability problem, can be simulated as follows. Let $\pi$ be any topological ordering on vertices of $G$. Let $L(i)$ be the $i^{th}$ layer (excluding layer containing just the source vertex, $s$) in topologically sorted $G$. Then probability that a path to $u \in L(1)$ exists is $p$ because we let each vertex in this layer fail independently with probability $1-p$. For vertex $v$ in any subsequent layer, if there exists a path to any of vertices in $P(v)$, the set of predecessors of $v$, then we let $v$ fail independently with probability $1-p$. If no path to any of predecessors of $v$ exists, then no path to $v$ can exist and it is immaterial whether $v$ fails or not. Thus, we let $v$ fail with probability $1$. As explained above, this is exactly the process simulated by $IS(G', c_v, p_v, \pi)$. Thus, computing $Pr(X_t=1)$ is $\#P$-complete. 

However, we need to prove hardness of computing $\Lambda = \sum_{u \in V'}Pr(X_{u}=1)$. If we can prove that from $\Lambda$ we can compute the expected number of vertices in graph to which $s$ has a path, say $\alpha= \sum_{v \in V}Pr(X_{v'}=1)$, then from Lemma \ref{sec:hardness-result}, we are done.

Since $\forall v \in V, Pr(X_{v'} = 1) = Pr(X_{b_v}=1)\cdot Pr(X_{f_v} = 1) = Pr(X_{b_v}=1)\cdot p$
and $Pr(X_{f_v}) = p$, we have:
\begin{align*}
  \Lambda =& \sum_{v \in V}(Pr(X_{v'}=1)+Pr(X_{b_v}=1)+Pr(X_{f_v}=1))
  =& \sum_{v \in V}(Pr(X_{v'}=1)+\frac{Pr(X_{v'}=1)}{p}+p)
\end{align*}
From above, we can easily compute $\alpha$. Hence, the claim follows.
\end{proof}

We note that AST is $\#P$-complete even when degrees of vertices of the input graph is constrained to be $3$. Thus, indegree of a node (through which a path from $s$ to $t$ can pass) has to be $1$ or $2$. If $p$ is the survival probability of a vertex in the AST problem instance, then the possible types of areas in the corresponding instance of IST are in $\{(1,1),(1,2),(p,1),(0,2)\}$, where the first two types correspond to blocking nodes in $G$, the forwarding nodes are of type $(p,1)$ and the vertices corresponding to original vertices are of type $(0,2)$. Thus, IST is hard on graphs with maximum degree constrained to $3$ and number of types constrained to $4$.

\section{Computing Expected Number of Adopters}
\label{sec:comp-expect-numb}
Here we give an algorithm to compute $E(I_n)$, given a spreading strategy $\pi$ with thresholds given by vector $\boldsymbol{c}$ and initial probabilities of acceptance given by vector $\boldsymbol{p}$.
Let $Y_k$ be the number of $1$ decisions among vertices in $\{\pi(1),\pi(2),\ldots,\pi(k)\}$. We note that $I_k = 2Y_k - k$. 
Since $E(I_n) = \sum_{i \in \{1\ldots n\}} xPr(I_n=x)$, we are interested in computing $Pr(I_n=x), \fa x \in \{-n\ldots n\}$.
\givenschedule*
Let $A$ be a $n \times (2n+1)$ matrix where $A[k,x] = Pr(I_k=x), k \in \{1\ldots n\}, x \in \{-n\ldots n\}$. Let $v=\pi(k)$. The following recurrence might be used to arrive at a dynamic programming formulation:
\begin{align*}
  A[k,x] &\leftarrow Pr(X^k_v=1)A[k-1,x-1] + Pr(X^k_v=-1)A[k-1, x+1]
\end{align*}
However, one needs to be careful when computing $Pr(X^k_v=1)$ because it is dependent of $I_{k-1}$. Thus, in the correct recurrence we must have $Pr(X^k_v=1|I_{k-1}=x-1)$ and $Pr(X^k_v=-1|I_{k+1}=x+1)$ instead of $Pr(X^k_v=1)$ and $Pr(X^k_v=-1)$ respectively. Below we derive the dynamic program keeping this subtelty in mind.
Let $v=\pi(k+1)$. We have:
\begin{align*}
  Pr(I_{k+1}=x+1 | I_k = x) =& \begin{cases}
    p_{v}& \text{if } -c_v < x < c_v\\
    1 & \text{if }x \geq c_v\\
    0& \text{otherwise}    
  \end{cases}\\
  Pr(I_{k+1}=x-1|I_k = x) =& 1 - Pr(I_{k+1} = x+1|I_k = x)
\end{align*}
We have:
\begin{align*}
  Pr(I_{k+1} = x) =& Pr(I_{k+1}=x|I_k=x-1)Pr(I_k=x-1) \\&+ Pr(I_{k+1}=x|I_k=x+1)Pr(I_k=x+1) 
\end{align*}
The above relation suggests a dynamic program for computing $E(I_n)$. The matrix $A$ is initialized with $A[1,1] = p_{\pi(1)}, A[1,-1] = 1-A[1,1], A[1,0] = 0, A[k,x] = 0, \fa x > k, A[k,x] = 0, \fa x < -k$. 
When $|x| < n, k > 1$, then any $A[k,x]$ depends on $A[k-1, x+1]$ and $A[k-1, x+1]$ and we get the recurrence:
\begin{align*}
  A[k,x] \leftarrow& Pr(I_{k}=x|I_{k-1}=x-1)A[k-1,x-1] \\&+ Pr(I_k=x|I_{k-1}=x+1)A[k-1, x+1]
\end{align*}
From $A$, $E(I_n)$ can be computed as follows:
\begin{align*}
  E(I_n) = \sum_{i \in \{1\ldots n\}} xPr(I_n=x) = \sum_{i \in \{1\ldots n\}}iA[n,i]
\end{align*}

\section{Adaptive Marketing Strategy}
\label{sec:best_schedule_dp}
In this section we propose a dynamic program for computing best adaptive spreading strategy and thus, prove Theorem \ref{thm:best_schedule_dp}. Here we give dynamic program when there are two types of areas. This can be extended to any constant number of types. Let $B(n_1, n_2, k)$ be the expected number of areas that adopt the product for a best ordering where $n_1$ is number of areas of type $1$ and $n_2$ is the number of areas of type $2$ in the market $k$ is sum of decisions of vertices that have been scheduled so far. We note that deployment number $k$ is equal to difference of number of yes decisions and no decisions. Let thresholds and initial acceptance probabilities for vertices of type $i$ be $c_i$ and $p_i$. At any given time in the strategy, let $B_i$ be the best possible result if an area of type $i$ is scheduled next. Depending on value of $k$, we have the following cases (cases 2 and 4 will not occur if $c_1 = c_2$):
\begin{enumerate}
\item $n_1 = 0 \vee n_2 = 0$: If all areas are of the same type, then all spreading strategies are equivalent and we can choose any arbitraty spreading strategy for the remaining areas.
\item $c_1 \leq k < c_2$: In this case, areas of type $1$ will accept the idea \textbf{w.p.} $1$. Areas of type $2$ will accept the idea with probability $p_2$ and reject it with probability $1-p_2$.
  \begin{align*}
    B_1 =&1 + B(n_1-1, n_2, k+1)\\
    B_2 =&p_2 + p_2B(n_1 , n_2-1, k+1)+(1-p_2)B(n_1, n_2-1, k-1)\\
    B(n_1, n_2, k) =& \max\{B_1, B_2\}
  \end{align*}
\item $-c_1 < k < c_1$: In this case, both types of areas will decide to accept or reject the idea on basis of initial acceptance probabilities. Therefore:
  \begin{align*}
    B_1 =&p_1 + p_1B(n_1-1, n_2, k+1) + (1-p_1)B(n_1-1, n_2, k-1)\\
    B_2 =&p_2 + p_2B(n_1 , n_2-1, k+1)+(1-p_2)B(n_1, n_2-1, k-1)\\
    B(n_1, n_2, k) =&\max\{B_1, B_2\}
  \end{align*}
\item $-c_2 < k \leq -c_1$: In this case, areas of type $1$ will reject the idea with probability $1$ and areas of type $2$ will accept the idea with probability $p_2$.
  \begin{align*}
    B_1 =&B(n_1-1, n_2, k+1)\\
    B_2 =&p_2 + p_2B(n_1 , n_2-1, k+1)+(1-p_2)B(n_1, n_2-1, k-1)\\
    B(n_1, n_2, k) =& \max\{B_1, B_2\}
  \end{align*}
\item $k \leq -c_2$: In this case, both types of areas will reject the idea. Therefore:
  \begin{align*}
    B(n_1, n_2, k) = 0
  \end{align*}
\item $k \geq cc_2$: In this case, both types of areas will reject the idea. Therefore:
  \begin{align*}
    B(n_1, n_2, k) = n_1 + n_2
  \end{align*}
\end{enumerate}
This can easily be extended to any constant number of types. The time complexity with $t$ types is $O(n^{t+1})$.

\section{Missing Proofs}
\label{sec:miss}
\subsection{Proof of \lref{lem-triv}}
\label{sec:miss:lem1}
\begin{proof}
We prove this lemma by proving that:
\begin{align}
  Pr(I_{k+t} \geq x;\pi) &\geq Pr(I_{k+t} \geq x;\pi'), \fa t \in \{1\ldots n-k\}\label{eq:8}
\end{align} 
We note that the above implies $E(I_n; \pi) \geq E(I_n;\pi')$.
We prove that if $Pr(I_{k} \geq x;\pi) \geq Pr(I_{k} \geq x;\pi')$ then $Pr(I_{k+1} \geq x;\pi) \geq Pr(I_{k+1} \geq x;\pi')$ for all $x \in \mathbb{Z}$. This argument can be successively applied to prove (\ref{eq:8}).
Let $\pi(k+1) = v$. $X_v$ will be $1$ iff either $I_k \geq c_v$ and $v$ accepts idea based on threshold rule or $-c_v < I_k < c_v$ and $v$ decides to accept the idea based on initial acceptance probability $p_v$. Thus:
\begin{align}
  Pr(X_v = 1) =& Pr(I_k \geq c_v) + Pr(-c_v < I_k < c_v)p_v \nonumber
\end{align}
   Substituting $Pr(-c_v < I_k < c_v) = Pr(I_k \geq -c_v+1) - Pr(I_k \geq c_v)$, we have:
\begin{align}
  Pr(X_v = 1)=& Pr(I_k \geq c_v) + (Pr(I_k \geq -c_v+1) - Pr(I_k \geq c_v))p_v \nonumber
\end{align}
By rearranging the terms, we get:
\begin{align}
  Pr(X_v = 1)=& Pr(I_k \geq c_v)(1-p_v) + Pr(I_k \geq -c_v + 1)p_v\label{eq:6}
\end{align}
We are given that $Pr(I_{k} \geq x;\pi) \geq Pr(I_{k} \geq x;\pi'), \fa x \in \mathbb{Z}$. From this and from (\ref{eq:6}), we have, $Pr(X_v = 1;\pi) \geq Pr(X_v = 1;\pi')$. Thus, $Pr(I_{k+1} \geq x;\pi) \geq Pr(I_{k+1} \geq x;\pi'), \fa x \in \mathbb{Z}$.
\end{proof}

\end{document}